  \providecommand\BibTeX{{%
    \normalfont B\kern-0.5em{\scshape i\kern-0.25em b}\kern-0.8em\TeX}}}
\newcommand{\pos}{\text{pos}}
\newcommand{\wel}{\text{Wel}}
\newcommand{\rev}{\text{Rev}}
\newcommand{\opt}{\text{OPT}}
\newcommand{\welopt}{\text{Wel}^{\opt}}
\newcommand{\xhdr}[1]{\vspace{2mm} \noindent{\bf #1}}
\newcommand{\indic}{\mathbb{1}}
\newcommand{\edit}[1]{{{#1}}}
\begin{document}

\title{Towards Efficient Auctions in an Auto-bidding World}

\author{Yuan Deng}
\email{dengyuan@google.com}
\affiliation{%
  \institution{Google}
  \streetaddress{111 8th Ave}
  \city{New York}
  \state{NY}
  \country{USA}
  \postcode{10011}
}
\author{Jieming Mao}
\email{maojm@google.com}
\affiliation{%
  \institution{Google}
  \streetaddress{111 8th Ave}
  \city{New York}
  \state{NY}
  \country{USA}
  \postcode{10011}
}
\author{Vahab Mirrokni}
\email{mirrokni@google.com}
\affiliation{%
  \institution{Google}
  \streetaddress{111 8th Ave}
  \city{New York}
  \state{NY}
  \country{USA}
  \postcode{10011}
}
\author{Song Zuo}
\email{szuo@google.com}
\affiliation{%
  \institution{Google}
  \streetaddress{111 8th Ave}
  \city{New York}
  \state{NY}
  \country{USA}
  \postcode{10011}
}

\begin{abstract}
Auto-bidding has become one of the main options for bidding in online advertisements, in which advertisers only need to specify high-level objectives and leave the complex task of bidding to auto-bidders. In this paper, we propose a family of auctions with boosts to improve welfare in auto-bidding environments with both return on ad spend constraints and budget constraints. Our empirical results validate our theoretical findings and show that both the welfare and revenue can be improved by selecting the weight of the boosts properly.
\end{abstract}

\begin{CCSXML}

\end{CCSXML}

\keywords{Auto-bidding, Mechanism Design, Online Advertising}

\maketitle

\section{Introduction}

Auto-bidding, such as {\em Target CPA} (cost per acquisition) and {\em Target ROAS} (return on ad spend) has become one of the main options for bidding in online advertisements \cite{GoogleAutobidder,FacebookBid,TiktokBid}. Instead of asking the advertisers for fine-grained bids on potential keywords, auto-bidding products solicit high-level objectives and constraints only. Given the high-level information, the auto-bidding products bid on behalf of the advertisers at the serving time to convert the high-level information into a per-query bid, based on the predicted performance of the potential ad impression. For example, an auto-bidding product, such as target CPA, lets the advertisers to specify their daily budgets and average cost per conversion; with these information, it bids to maximize the number of conversions subject to these two constraints~\cite{GoogleAutobidder}. In this way, auto-bidding significantly simplifies the interaction between the platforms and the advertisers such that the advertisers can focus on their high-level goals and leave the complicated bidding part to the auto-bidding products.

The increasing popularity of auto-bidding not only opens up the opportunity of novel designs for auto-bidding products, but also calls a revisit of the current auction design for online advertising, which is heavily tailored for surplus-maximizing bidders~\cite{myerson1981optimal, edelman2007internet}. Surplus-maximizing bidders aim to optimize their quasi-linear utilities, i.e., total value minus total payments, but auto-bidders behave very differently. The most notable difference is that the objective of surplus-maximizing bidders takes the payment into account, while the payment only appear in the constraints of auto-bidders, e.g., target CPA auto-bidders. Therefore, classic results on auctions for surplus-maximizing bidders do not directly tell practitioners how to design auctions for auto-bidding environments. 

Recently, \citet{aggarwal2019autobidding} initiate the study of auction design with auto-bidding. They show that {\em uniform bidding} is the optimal bidding strategy for auto-bidders if the underlying auctions are incentive-compatible for surplus-maximizing bidders (e.g., second-price auctions). Here, uniform bidding is a bidding strategy in which the bidder always bids her private value multiplied by a constant multiplier across all auctions. Moreover, for single-slot auctions, they demonstrate that the welfare achieved by an incentive-compatible auction at any reasonable equilibrium is at least $1/2$ of the optimal welfare, and this $1/2$ approximation is tight for second-price auctions.

In this paper, we study how the auctioneer can improve the welfare efficiency of the auctions in auto-bidding environment using additive boosts. Roughly speaking, in such an auction, each bidder receives an additive boost separately that adds to her bid. Consequently a bidder receiving a larger boost would have a higher chance to win. Moreover, the boost is introduced in an incentive-compatible way
that the higher boost the winner receives, the less she needs to pay, if she wins the auction. 
The new auctions maintain the incentive compatibility if the underlying auctions are incentive compatible, and they can be integrated into most ranking-based auction systems easily.

\edit{
In practice, the boosts can be a quantitative measure that combines the advertiser value, the quality of the ads, and the seller cost. With the boost aligned with the mixed objective, the platform could obtain a better balance between these key metrics, leading to a better online advertising ecosystem in a long term.}

\subsection{Our results}
We mainly focus on the model of target ROAS bidder in this paper, since target CPA is a special case of target ROAS. A target ROAS bidder aims to maximize the total value from conversions subject to two constraints: the budget constraint and the return on ad spend constraint, i.e., the total value the bidder received should be no less than the total cost.\footnote{In general, the return on ad spend constraint specifies a minimum ratio between the total value and the total cost. Without loss of generality, we assume such a minimum ratio as $1$ throughout this paper.}

We first consider a simple case in which auto-bidders have return on ad spend constraints but no budget constraint. 
The boosts used by the auctioneer
is {\em $c$-value-competitive} (for some constant $c > 0$), if the difference between the boosts of any two bidders is at least $c$ times the difference between their values.

\begin{theorem}
\label{thm:no_budget_intro}
When auto-bidders have return on ad spend constraints but no budget constraint, Vickrey–Clarke–Groves (VCG) auctions with $c$-value-competitive boosts guarantees a $(c+1)/(c+2)$-approximation to the optimal welfare.
\end{theorem}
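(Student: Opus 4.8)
The plan is to reduce to uniform bidding, normalise the bid multipliers, and then read off a lower bound on each winner's VCG payment that makes the contribution of the boost explicit.

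First I would invoke the theorem of Aggarwal et al., which applies because VCG with boosts is incentive compatible for surplus‑maximising bidders: at any (reasonable) equilibrium every auto‑bidder $i$ bids uniformly, $\beta_{ij}=\alpha_i v_{ij}$ for some multiplier $\alpha_i$, so query $j$ is allocated to $w(j)\in\arg\max_i(\alpha_i v_{ij}+b_{ij})$ and the winner pays $p_j=\max_{k\ne w(j)}(\alpha_k v_{kj}+b_{kj})-b_{w(j),j}$ (truncated at $0$). Next I would argue that one may assume $\alpha_i\ge 1$ for every bidder: if $\alpha_i<1$, then any query that raising $i$'s multiplier to $1$ would newly win is acquired at a price strictly below $v_{ij}$ (the price equals the runner‑up's boosted bid minus $b_{ij}$, hence is below $i$'s new boosted bid $v_{ij}+b_{ij}$ minus $b_{ij}$), so this deviation strictly increases $i$'s total won value without violating the ROAS constraint — impossible at a best response — and therefore moving to multiplier $1$ leaves $i$'s allocation and payments unchanged. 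Finally, write $o(j)\in\arg\max_i v_{ij}$ for the welfare‑optimal winner of query $j$, let $D=\{j:\,w(j)\ne o(j)\}$, and put $\ell_j=v_{o(j),j}-v_{w(j),j}\ge 0$; since the two allocations agree outside $D$, $L:=\sum_{j\in D}\ell_j=\welopt-\wel$, where $\wel$ denotes the welfare of the equilibrium.

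The crux is the payment bound on $j\in D$. Because $o(j)\ne w(j)$, the bid of $o(j)$ is one of the terms in the max defining $p_j$, so $p_j\ge\bigl(\alpha_{o(j)}v_{o(j),j}+b_{o(j),j}\bigr)-b_{w(j),j}$. Using $\alpha_{o(j)}\ge 1$ and the $c$‑value‑competitiveness of the boosts — which, together with $v_{o(j),j}\ge v_{w(j),j}$, gives $b_{o(j),j}-b_{w(j),j}\ge c\,(v_{o(j),j}-v_{w(j),j})=c\,\ell_j$ — this yields $p_j\ge v_{o(j),j}+c\,\ell_j$. Summing over all queries and using that every bidder's ROAS constraint forces total value to dominate total payment, $\wel\ge\sum_j p_j\ge\sum_{j\in D}(v_{o(j),j}+c\,\ell_j)\ge(c+1)\sum_{j\in D}\ell_j=(c+1)L$, the last inequality using $v_{o(j),j}\ge\ell_j$. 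Since $L=\welopt-\wel$, this rearranges to $(c+2)\wel\ge(c+1)\welopt$, i.e. $\wel\ge\tfrac{c+1}{c+2}\welopt$.

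The step I expect to be the main obstacle is the normalisation $\alpha_i\ge 1$: a priori a best‑responding auto‑bidder may use a multiplier below $1$, and one has to verify carefully that bumping it up to $1$ really leaves that bidder's behaviour — and hence the equilibrium — intact; this is exactly where the ROAS constraint and the boost's payment rule interact, and it is what licenses the factor $\alpha_{o(j)}\ge 1$ in the payment bound. Everything else is bookkeeping: the genuinely new ingredient relative to the $c=0$ (boost‑free, $1/2$) case is that the boost contributes the extra $c\,\ell_j$ to $p_j$, and combining this with $\alpha_{o(j)}\ge 1$ and the slack $v_{o(j),j}\ge\ell_j$ upgrades $\tfrac{c}{c+1}$ to $\tfrac{c+1}{c+2}$.
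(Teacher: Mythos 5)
Your single-slot argument is correct and is essentially the paper's own proof of its warm-up result (Theorem~\ref{thm:uboost}): on a misallocated query the winner's payment is at least the optimal winner's ranking score minus the actual winner's boost, which by $\alpha_{o(j)}\ge 1$ and $c$-value-competitiveness is at least $v_{o(j),j}+c\,\ell_j$; combining with $\wel\ge\rev$ (each bidder's ROAS constraint caps her payment by her value) yields $(c+2)\wel\ge(c+1)\welopt$. Your normalisation $\alpha_i\ge 1$ is also the paper's (a multiplier below $1$ is dominated because every query is won at a price strictly below its value), though note the paper quantifies over all valid undominated multiplier profiles rather than only over best responses or equilibria, which is a slightly stronger form of the guarantee than the one you prove.

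The genuine gap is coverage: the theorem is stated in the paper's model of position auctions with $s_j\ge 1$ slots per auction, and your proof only handles $s_j=1$. With multiple slots, ``the'' winner $w(j)$ and ``the'' optimal winner $o(j)$ must be replaced by the prefix sets $S_{j,k}$ and $S^{\opt}_{j,k}$ of bidders occupying the top $k$ slots under the realised and optimal allocations; both the welfare and the VCG payment have to be decomposed telescopically over the position-normalizer differences $\pos_{j,k}-\pos_{j,k+1}$; and the payment bound must be applied to the whole set $S_{j,k}\setminus S^{\opt}_{j,k}$ at once. Concretely, one uses $\lvert S_{j,k}\setminus S^{\opt}_{j,k}\rvert=\lvert S^{\opt}_{j,k}\setminus S_{j,k}\rvert$ to charge each displaced optimal bidder $i\in S^{\opt}_{j,k}\setminus S_{j,k}$ a copy of $\hat b_{k,j}\ge \alpha_i v_{i,j}+z_{i,j}\ge v_{i,j}+z_{i,j}$, and then invokes the $c$-value-competitive condition pairwise between $i\in S^{\opt}_{j,k}\setminus S_{j,k}$ and $i'\in S_{j,k}\setminus S^{\opt}_{j,k}$, which is licensed by the observation that $v_{i,j}\ge v_{i',j}$ for every such pair (both prefix sets are sorted by their respective scores). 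None of this changes the $(c+1)/(c+2)$ arithmetic, but it is a genuine additional layer of the argument rather than bookkeeping, and your ``query $j$ is allocated to $w(j)$'' framing does not produce it.
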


We emphasize our results only assume the bidders adopt feasible and undominated strategies, and therefore, our results do not rely on bidders bidding at any notions of equilibrium, in contrast to the classic notion of price of anarchy (PoA)~\citep{papadimitriou2001algorithms}. Our theorems can provide welfare approximation guarantees for practical scenarios, in which values are probably changing overtime so that auto-bidders usually do not converge to some equilibrium.

For auto-bidding with both return on ad spend constraints and budget constraints, we need a stronger notion of boosts, {\em $c$-benchmark-competitive boosts}.
The $c$-benchmark-competitive boosts are specified by a benchmark allocation and satisfy that the difference between the boosts of any two bidders is at least $c$ times the value of one of these two bidders who ranks higher in the benchmark.
\begin{theorem}
\label{thm:budget_intro}
When auto-bidders have both return on ad spend constraints and budget constraints, Vickrey–Clarke–Groves (VCG) auctions with $c$-benchmark-competitive boosts guarantees a $(c+1)/(c+2)$-approximation to the welfare in the benchmark allocation.
\end{theorem}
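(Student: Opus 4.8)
The plan is to follow the template behind Theorem~\ref{thm:no_budget_intro}, using the extra strength of $c$-benchmark-competitive boosts to neutralise the budget constraints. Fix an arbitrary feasible and undominated bid profile, and let $x$ and $p$ denote the allocation and per-query VCG-with-boosts payments it induces; let $x^*$ be the benchmark allocation, write $i(q)$ and $i^*(q)$ for the winners of query $q$ under $x$ and $x^*$, and set $W=\sum_q v_{i(q),q}$ and $W^*=\sum_q v_{i^*(q),q}$. On queries where $x$ and $x^*$ agree the two welfares coincide, and on the disagreement set $B=\{q:i(q)\neq i^*(q)\}$ the auction's contribution is nonnegative; hence, writing $S^*_B=\sum_{q\in B}v_{i^*(q),q}$, we have $W^*\le W+S^*_B$. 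The whole statement therefore reduces to the single inequality
\[
W\ \ge\ (c+1)\,S^*_B ,
\]
which, combined with $W^*\le W+S^*_B$, gives $W^*\le\frac{c+2}{c+1}W$ as required.

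To establish $W\ge(c+1)S^*_B$, I would first sum the return-on-ad-spend constraint over all bidders to get $W\ge\sum_q p_q$, so it suffices to lower bound $\sum_{q\in B}p_q$ by $(c+1)S^*_B$. Fix $q\in B$. Since the winner $i(q)$ has the largest boosted bid and $i^*(q)\neq i(q)$ is among the remaining bidders, the VCG price satisfies
\[
p_q \;=\; \max_{j\neq i(q)}\bigl(b_{jq}+\beta_{jq}\bigr)-\beta_{i(q),q} \;\ge\; b_{i^*(q),q}+\bigl(\beta_{i^*(q),q}-\beta_{i(q),q}\bigr),
\]
and because $i^*(q)$ ranks above $i(q)$ on query $q$ in the benchmark, $c$-benchmark-competitiveness yields $\beta_{i^*(q),q}-\beta_{i(q),q}\ge c\,v_{i^*(q),q}$. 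Hence $p_q\ge b_{i^*(q),q}+c\,v_{i^*(q),q}$, and it remains to recover one further copy of $v_{i^*(q),q}$ per disagreement query.

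This last step is where budgets enter, and it is the part I expect to be the main obstacle. If $i^*(q)$ still has budget to spare, then undominatedness forces $b_{i^*(q),q}\ge v_{i^*(q),q}$: otherwise she could raise her bid on $q$ up to $v_{i^*(q),q}$, which by the single-slot structure only wins her query $q$ at a price at most $v_{i^*(q),q}$, weakly raising her total value while preserving both constraints, contradicting undominatedness. For such queries $p_q\ge(c+1)v_{i^*(q),q}$, exactly as needed. The delicate case is a query $q$ whose benchmark winner $i^*(q)$ has exhausted her budget $B_{i^*(q)}$ in the auction; here $b_{i^*(q),q}\ge v_{i^*(q),q}$ may fail, so instead I would charge the missing $v_{i^*(q),q}$ to $i^*(q)$'s own allocation: having paid out her entire budget, her return-on-ad-spend constraint (and the fact that a VCG price never exceeds the winning bid) forces her auction value to be at least $B_{i^*(q)}$, while budget-feasibility of $x^*$ caps the benchmark value she receives, and comparing the two pays for the $v_{i^*(q),q}$ terms of her budget-exhausted disagreement queries out of auction welfare already counted on the left-hand side. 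The genuinely fiddly point is to organise this double accounting so that no unit of auction welfare is charged twice, across the return-on-ad-spend aggregation and across the budget-exhausted bidders; once that is handled, adding the two cases over $q\in B$ gives $\sum_{q\in B}p_q\ge(c+1)S^*_B$, hence $W\ge(c+1)S^*_B$, and the theorem follows.
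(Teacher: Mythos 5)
Your skeleton matches the paper's in its two good ingredients: the price lower bound $p_q\ge b_{i^*(q),q}+c\,v_{i^*(q),q}$ obtained from the $c$-benchmark-competitive condition (since the benchmark winner ranks above the actual winner in $o$), and the use of undominatedness to get $b_{i^*(q),q}\ge v_{i^*(q),q}$ for benchmark winners who are not budget-constrained. But there are two genuine gaps. First, you run the whole accounting on the \emph{raw} welfares $W=\sum_q v_{i(q),q}$ and $W^*=\sum_q v_{i^*(q),q}$, whereas the theorem compares \emph{liquid} welfares on both sides: $\wel(x)=\sum_i\min(B_i,\text{value}_i(x))$ versus $\wel(o)$. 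Since $\wel(x)\le W$ (strictly, and by a lot, whenever the auction dumps value on a budget-capped bidder), the inequality $W^*\le\frac{c+2}{c+1}W$ neither implies nor is implied by the claimed approximation. The summed return-on-ad-spend constraint in fact gives you the stronger $\wel(x)\ge\rev$ (each bidder's payment is at most both her value and her budget), and the argument has to be organized around $\wel(x)$, not $W$.

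Second, your stated endpoint for the delicate case is wrong: no amount of re-accounting will restore $\sum_{q\in B}p_q\ge(c+1)S^*_B$, because that inequality is simply false when a benchmark winner $i^*(q)$ has exhausted her budget and is bidding with a small multiplier --- then $p_q$ can be as low as roughly $c\cdot v_{i^*(q),q}$, and you only recover a $c/(c+1)$-type ratio. The paper's resolution is structural rather than a payment bound: let $M$ be the set of bidders whose \emph{payments} meet their budgets in the auction outcome. For $i\in M$, the return-on-ad-spend constraint forces auction value at least $B_i$, so her auction liquid welfare is exactly $B_i$, which already dominates her benchmark liquid welfare contribution $\min(B_i,\cdot)\le B_i$; these bidders are removed from \emph{both} sides of the comparison, and in particular their disagreement-query terms are never charged to revenue at all. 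The revenue bound is then only required to cover benchmark winners outside $M$ on disagreement slots, for whom $\alpha_i\ge1$ holds and the full $(1+c)v_{i,j}$ is recovered, yielding $\frac{c+2}{c+1}\wel(x)\ge\wel(x)+\frac{\rev(x)}{c+1}\ge\wel(o)$. Your instinct to ``charge the missing $v$ to $i^*(q)$'s own allocation'' is the right idea, but it must replace the payment inequality for those queries, not repair it; and the whole argument must also be carried out for position auctions (sets $S_{j,k}$ of top-$k$ winners and telescoping position normalizers), not just single slots.
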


The welfare performance of $c$-benchmark-competitive boosts depends on the the welfare performance of the benchmark allocation. Particularly, if the benchmark allocation guarantees an $\gamma$-approximation to the optimal welfare, then the corresponding $c$-benchmark-competitive boosts guarantees a $\gamma \cdot (c+1)/(c+2)$-approximation to the optimal welfare.

In both Theorem~\ref{thm:no_budget_intro} and Theorem~\ref{thm:budget_intro}, the welfare approximation ratio approaches $1$ as $c$ goes to infinity; 
meanwhile, the revenue approaches $0$ since the boosts are deducted from the bidders' payment. Therefore, one must be cautious about the choice of $c$. As we will demonstrate in our empirical results, a properly selected $c$ can improve both revenue and welfare, but a choice of $c$ that is too large may have negative impact on revenue performance.

Under the assumption that auto-bidders always use uniform bidding, we show that all the above results can apply to generalized second-price (GSP) auctions. 
Without such an assumption, uniform bidding may not be the strategy adopted by auto-bidders, because uniform bidding may not be optimal strategy in GSP since GSP is not incentive compatible for surplus-maximizing bidders. We also provide a discussion about first-price auctions as the ad exchange market has been shifting to first-price auctions recently \cite{paes2020competitive}.

\subsection{Additional related work}

Optimal mechanism design is a central topic in economic study, dating back to the seminal work of Vickrey–Clarke–Groves (VCG) auctions~\citep{vickrey1961counterspeculation,clarke1971multipart,groves1973incentives} and Myerson's auction~\citep{myerson1981optimal}. Since then, mechanism design has been successfully deployed in many different fields, particularly in auctions, such as combinatorial auctions for reallocating radio frequencies~\citep{cramton2006combinatorial} and generalized second-price auctions and dynamic auctions for online advertising~\citep{edelman2007internet,mirrokni2020non}. In contrast to these works considering surplus-maximizing bidders, our work focuses on auto-bidders such as target CPA bidders and target ROAS bidders instead.

The simplest auto-bidding product before target CPA and target ROAS is budget optimization with surplus-maximizing bidders. The revenue-optimal auction of surplus-maximizing bidders with budget constraints is characterized by~\citet{pai2014optimal}. For applications in online advertising, \citet{balseiro2019learning} develop budget management strategies that are no-regret in a long run and   \citet{balseiro2017budget,balseiro2020budget} provide a thorough study to compare different commonly used budget management strategies in practice.

Our auctions with boosts are affine maximizer auctions ~\citep{lavi2003towards,sandholm2015automated} specialized in the ad auction setting. Intuitively, an affine maximizer auction specifies the seller's values for allocations, so that the auction finds an allocation that maximizes the sum of the bidders' values and the seller's value. It has been shown that any reasonable incentive-compatible and individual rational general combinatorial auction mechanism is an affine maximizer auction~\citep{lavi2003towards}.

Some of our results rely on the assumption that the auto-bidders adopt uniform bidding strategies when the underlying auction is not incentive-compatible for surplus-maximizing bidders. In practice, it is usually hard for bidders to adopt and optimize non-uniform bidding strategies, and moreover, uniform bidding has been shown to perform well against optimal non-uniform bidding strategies in ad auctions~\citep{feldman2007budget, feldman2008algorithmic, balseiro2019learning, deng2020data, bateni2014multiplicative}.

\section{Preliminaries}\label{sec:prelim}
We consider $n$ bidders bidding simultaneously in $m$ position auctions~\cite{lahaie2007sponsored,varian2007position}. We use $s_j$ to denote the number of slots of auction $j \in [m]$. For bidder $i \in [n]$, the value of the $k$-th slot in auction $j$ is $v_{i,j} \cdot \pos_{j,k}$. Here $v_{i,j}$ is the base value of auction $j$ for bidder $i$, and $\pos_{j,k}$ is the position normalizer of the $k$-th slot in auction $j$. Without loss of generality, we assume $\pos_{j,k}$ is non-increasing in $k$. We use $I = (n,m,\{s\}_j, \{v\}_{i,j}, \{\pos\}_{j,k})$ to denote a problem instance. As usual, we use $-i$ to denote the bidders other than bidder $i$.

We use bids $b$ to denote the bidders' bids such that $b_{i, j}$ is bidder $i$'s bid in auction $j$. Moreover, we use allocations $x$ and prices $p$ to denote the outcome of the auctions. In particular, $x_{i,j,k}$ is 1 if bidder $i$ gets the $k$-th slot of auction $j$ and 0 otherwise, and moreover, $p_{i,j}$ is the price paid by the bidder $i$ in auction $j$.

\xhdr{Position Auctions with Boosts.}
In the with-boost version, bidder $i$ in auction $j$ will receive an additive boost $z_{i,j} \geq 0$ such that her ranking score in the auction is the sum of her bid and her boost, i.e., $b_{i,j} + z_{i,j}$. We consider two types of auctions: the Vickrey–Clarke–Groves (VCG) auction and the generalized second-price (GSP) auction. Both of them generalize the second-price auction from single-slot auctions into position auctions. They have the same allocation rule which allocates the bidder with the $k$-th highest ranking score into the $k$-th slot.

In auction $j$, denote by $\hat{b}_{k,j}$ the $k$-th highest ranking score for $k \in [s_j]$. In a VCG auction, when the $k$-th slot's winner is bidder $i$, her payment in auction $j$ is $p_{i, j} = \sum_{\kappa = k+1}^{s_j} (\hat{b}_{\kappa,j} - z_{i,j})^+ \cdot (\pos_{\kappa-1,j} - \pos_{\kappa,j})$, where $(\cdot)^+$ denotes $\max\{0, \cdot\}$. In contrast, her payment is $p_{i, j} = (\hat{b}_{k+1,j} - z_{i,j})^+ \cdot \pos_{j,k}$ if she wins the $k$-th slot in a GSP auction. If bidder $i$ does not win any slot in auction $j$, her payment is simply $p_{i, j} = 0$. Notably, the boost $z_{i,j}$ is deducted from the runner-up's ranking score to preserve incentive compatibility.

\xhdr{Auto-bidders.}
All bidders considered in this paper are auto-bidders with budget constraints and return on ad spend constraints. In addition, we focus on auto-bidders adopting uniform bidding strategies since other strategies are dominated for truthful auctions\footnote{For GSP auctions, using non-uniform bidding strategies might be beneficial for the auto-bidders. We refer readers to Section~\ref{sec:non-truthful} for a detailed discussion.}~\citep{aggarwal2019autobidding}: each auto-bidder $i$ maintains a bid multiplier $\alpha_i$ and bids $b_{i,j} = \alpha_i \cdot v_{i,j}$ in auction $j$. The goal of the auto-bidder is to find $\alpha_i$ that maximizes the total value ($\sum_{j=1}^m \sum_{k=1}^{s_j} x_{i,j,k} \cdot v_{i,j} \cdot \pos_{j,k}$) subject to the constraint that the total payment ($\sum_{j=1}^m p_{i,j}$) is at most the minimum of the total value she received and the budget $B_i$.

Clearly, a bid multiplier would be invalid if it would result in violations of either the budget constraint or the return on ad spend constraint. Moreover, observe that using a bid multiplier strictly less than $1$ before the bidder hits her budget constraint is a dominated strategy.\footnote{This is because the payment is at most the bidder's bid in both GSP and VCG, and therefore, her total payment is strictly less than her total value when using a bid multiplier strictly less than $1$. Thus, the bidder can benefit from raising the bid multiplier.} We denote by $\Theta$ the set of multiplier vectors in which no bidder adopts a dominated bid multiplier or an invalid bid multiplier.

Now we can write the bid as a function of the multiplier vector $\alpha$. Moreover, the allocation $x$ and the prices $p$ can be written as functions of multiplier vector $\alpha$ and auction format $\mathcal A$ (e.g. VCG, GSP), i.e., $x(\alpha, \mathcal A)$ and $p(\alpha, \mathcal A)$.



\xhdr{Liquid welfare.} 
Our objective is to design auctions maximizing the liquid welfare: the sum of each buyers' received values capped by their budgets. Formally, the liquid welfare is given by 
\[
\wel(x) = \sum_{i \in [n]} \min\left(B_i, \sum_{j=1}^m \sum_{k=1}^{s_j} x_{i,j,k} \cdot v_{i,j} \cdot \pos_{j,k}\right).
 \]
 
Our benchmark is defined as the maximum liquid welfare among all possible allocations: $\welopt = \max_{x}\wel(x)$. Note that $\welopt$ is also the maximum revenue that can be extracted from auto-bidders. For each auction format $\mathcal A$, we measure the performance of our mechanisms by its competitive ratio, which is defined as 
\[
    \min_{\alpha \in \Theta} \frac{\wel(x(\alpha, \mathcal A)) }{\welopt}
\]
over all problem instances. Notice that our measure is different from the classic notion of price of anarchy (PoA), which restricts attention to bid multipliers forming Nash equilibrium~\citep{papadimitriou2001algorithms}. Here, we only assume the bidders are using strategies that are valid and undominated, while it is not necessary that their strategies must form a Nash equilibrium.
Finally, the revenue performance is simply defined as the sum of payments: $\rev = \sum_{j=1}^m \sum_{i \in [n]} p_{i,j}(\alpha, \mathcal A)$.

\edit{
\xhdr{Auction with seller cost.}
All our results can be generalized to an environment in which the seller has a selling cost $\psi_{i,j}$ for each bidder $i$ and auction $j$. To take the cost into account, the boosts $z$ are adapted in a way such that $\tilde{z}_{i,j} = z_{i, j} - \psi_{i, j}$ for all $i$ and $j$. We note that although this is a rather simple extension to our model, it could be of potential interests for practitioners because such seller costs could capture the quality of the bidder-auction matching, which might be essential for the long term consideration of the online advertising ecosystem.}

\section{Single-slot, No budgets}
\label{sec:warmup}
In this section, we study the simpler setting in which auto-bidders have return on ad spend constraints only (i.e. $B_i = +\infty$ for all $i \in [n]$) to establish intuitions on how boosts can help improve the welfare. Moreover, we assume all auctions are single-slot auctions so that both VCG auctions and GSP auctions become second-price auctions (i.e. $s_j = 1$ for all $j \in [m]$). Without loss of generality, the position normalizer in each auction is assumed to be $1$.

For simplicity, we first consider boosts that perfectly correlate with the bidders' valuations, denoted by {\em uniform boosts}. For uniform boosts, $z_{i,j} = c \cdot v_{i,j}$ for all bidder $i$ and auction $j$. Note that the uniform boost with weight $c$ is a special case of $c$-value-competitive boosts (cf. Theorem~\ref{thm:no_budget_intro} or Section~\ref{sec:uboosts}).

To establish intuitions, we first consider a simple example that demonstrates uniform boosts with a small boost weight would result in welfare inefficiency. In particular, uniform boosts of boost weight $0$ corresponds to plain second-price auctions without boosts. This example also provides an upper bound on the competitive ratio in second-price auctions with uniform boosts.

\begin{example}
\label{ex}
Consider a setting with two auto-bidders and two auctions (i.e. $n = 2$, $m= 2$). Moreover, assume both auctions are single-slot (i.e. $s_1= s_2=1$), and all position normalizers are $1$. The bidders' valuations are as follows:

\begin{center}
  \begin{tabular}{ccccc}
    \toprule
                & & Auction $1$     & & Auction $2$ \\
    \midrule
     Bidder $1$ & & $v_{1,1} = 1+d$ & & $v_{1,2} = \varepsilon$  \\
     Bidder $2$ & & $v_{2,1} = 0$   & & $v_{2,2} = 1$  \\
    \bottomrule
  \end{tabular}
\end{center}
Here $d > 0$ and $\varepsilon \in (0, 1)$ is a small number.
\end{example}

\begin{lemma}\label{lem:lower-bound-uboost}
For any $d > 0$ and an arbitrarily small $\varepsilon > 0$, in a second-price auction with uniform boosts of boost weight $c \leq d$, denoted by $\mathcal A$, there exists a bid multiplier vector $\alpha \in \Theta$ such that 
\[
\frac{\wel(x(\alpha, \mathcal A)) }{\welopt} \leq \frac{d+1+\varepsilon}{d+2}.
\]
Moreover, the bid multiplier vector $\alpha$ forms a Nash equilibrium.
\end{lemma}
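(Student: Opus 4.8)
The plan is to exhibit the bad profile explicitly and then verify the three things the statement asks for: that the profile lies in $\Theta$, that it induces the claimed allocation, and that it is a Nash equilibrium.

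First, the benchmark. With no budgets the liquid welfare equals the total realized value, and the welfare-optimal allocation gives auction $1$ to bidder $1$ (value $1+d$) and auction $2$ to bidder $2$ (value $1$), since routing auction $2$ to bidder $1$ only adds $\varepsilon<1$; hence $\welopt = d+2$. The candidate profile is $\alpha_2 = 1$ together with any $\alpha_1$ large enough that $(\alpha_1+c)\varepsilon > 1+c$ (for instance $\alpha_1 = (2+c)/\varepsilon$, which is $>1$ since $\varepsilon<1$). Under the uniform boosts $z_{i,j} = c\,v_{i,j}$, bidder $2$'s ranking score in auction $1$ is $(\alpha_2+c)v_{2,1}=0$, so bidder $1$ wins auction $1$; in auction $2$ the ranking scores are $(\alpha_1+c)\varepsilon$ for bidder $1$ and $\alpha_2 + c = 1+c$ for bidder $2$, so the choice of $\alpha_1$ makes bidder $1$ win auction $2$ as well. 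Bidder $1$ therefore takes both slots, $\wel(x(\alpha,\mathcal A)) = v_{1,1}+v_{1,2} = d+1+\varepsilon$, and the ratio is $\frac{d+1+\varepsilon}{d+2}$.

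Next, I would check $\alpha\in\Theta$. Bidder $2$ wins nothing, pays $0$, and is trivially feasible. Bidder $1$ pays $0$ in auction $1$ (the runner-up's ranking score there is $0$, so the truncated second price is $0$) and $(1+c - z_{1,2})^+ = 1 + c(1-\varepsilon)$ in auction $2$; since $c\le d$ this total is at most $1+d+\varepsilon$, which is exactly bidder $1$'s realized value, so her return-on-ad-spend constraint holds and her multiplier is feasible. Both multipliers are at least $1$ and feasible, so the profile lies in $\Theta$.

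Finally, the Nash property. Bidder $1$ already obtains the maximum total value any allocation could give her, $d+1+\varepsilon$, so — since an auto-bidder maximizes realized value subject to feasibility — she has no profitable deviation. Bidder $2$ currently gets $0$; the only way for her to obtain positive value is to win auction $2$ (her value in auction $1$ is $0$), which requires a ranking score at least $(\alpha_1+c)\varepsilon$ and hence an auction-$2$ payment of $(\alpha_1+c)\varepsilon - z_{2,2} = (\alpha_1+c)\varepsilon - c > 1$, exceeding her value of $1$ and violating her return-on-ad-spend constraint, so no such deviation is even feasible. Hence $\alpha_2=1$ is a best response and $\alpha$ is a Nash equilibrium. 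The main point to get right is the size of $\alpha_1$: it must be large enough that bidder $2$ cannot even \emph{feasibly} win auction $2$ (not merely prefer not to), while bidder $1$'s own payment — which is governed by the runner-up's score and so does not grow with $\alpha_1$ — still satisfies her return-on-ad-spend constraint; it is this last inequality, $1+c(1-\varepsilon)\le 1+d+\varepsilon$, that uses the hypothesis $c\le d$.
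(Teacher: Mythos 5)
Your proof is correct and follows essentially the same route as the paper's: same benchmark computation, same profile ($\alpha_2=1$, $\alpha_1$ large), and the same argument that bidder $2$'s only value-improving deviation is infeasible under her return-on-ad-spend constraint. If anything you are more careful than the paper, which loosely states bidder $1$ "pays $1$" where the boosted second price is actually $1+c(1-\varepsilon)$; your explicit check that $1+c(1-\varepsilon)\le 1+d+\varepsilon$ is where the hypothesis $c\le d$ is genuinely used.
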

\begin{proof}
First of all, observe that
\[
    \welopt = \max(v_{1,1},v_{2,1}) + \max(v_{1,2},v_{2,2}) = d+2.
\]
Consider a bid multiplier vector with $\alpha_1 > (1+c)/\varepsilon$ and $\alpha_2 = 1$. Under such a bid multiplier vector, the first bidder wins both auctions under $\mathcal A$ , and therefore $\wel(x(\alpha, \mathcal A)) = d+1+\varepsilon$. It is straightforward to verify that the return on ad spend constraints are satisfied: the first bidder receives total value $1 + d + \varepsilon$ and pays $1$. Moreover, for the second bidder to deviate, she has to use a bid multiplier $\alpha_2 \geq \alpha_1 \cdot \varepsilon > 1 + c$ to win the second auction, resulting in a payment $(1+c)$, which is larger than her received value $1$.
\end{proof}

When $c = d$ and $\varepsilon \to 0$, Lemma~\ref{lem:lower-bound-uboost} demonstrates a $(c+1)/(c+2)$ upper bound on the competitive ratio in second-price auctions with uniform boosts of boost weight $c$. Particularly, when $c = d = 0$ and $\varepsilon \to 0$, it recovers the $1/2$ upper bound for plain second-price auctions without boosts~\citep{aggarwal2019autobidding}.
The following theorem provides a matching lower bound on the competitive ratio in second-price auctions with uniform boosts of boost weight $c$.
\begin{theorem}\label{thm:uboost}
When auto-bidders have return on ad spend constraints only, the competitive ratio is at least $(c+1)/(c+2)$ in second-price auctions with uniform boosts of boost weight $c$.
\end{theorem}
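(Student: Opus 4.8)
The plan is to run a revenue-versus-welfare accounting argument, in the spirit of the $1/2$ lower bound of \citet{aggarwal2019autobidding} but sharpened to exploit the boost weight $c$. Fix any instance $I$ and any $\alpha \in \Theta$, and write $\mathcal A$ for the second-price auction with uniform boosts of weight $c$. Since every budget is infinite, the liquid welfare of an allocation is just the total base value of its winners. Let $o(j) \in \arg\max_i v_{i,j}$ be an optimal winner of auction $j$, so $\welopt = \sum_{j=1}^m v_{o(j),j}$; and since the ranking score of bidder $i$ in auction $j$ is $(\alpha_i + c)\,v_{i,j}$, the realized winner $w(j)$ of auction $j$ is the maximizer of $(\alpha_i+c)\,v_{i,j}$ and $\wel(x(\alpha,\mathcal A)) = \sum_{j=1}^m v_{w(j),j}$. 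Abbreviate $W := \wel(x(\alpha,\mathcal A))$ and $O := \welopt$; the goal is $W \ge \tfrac{c+1}{c+2}\,O$.

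First I would isolate the two properties that membership in $\Theta$ buys us. (i) Feasibility of the return-on-ad-spend constraint gives $\sum_j p_{i,j} \le \sum_j x_{i,j,1}\,v_{i,j}$ for every bidder $i$; summing over $i$ yields $\rev \le W$. (ii) Undominatedness, together with the absence of budgets, forces $\alpha_i \ge 1$ for every $i$. Next I would lower-bound the winner's payment on every auction $j$ with $w(j)\ne o(j)$: there bidder $o(j)$ loses, so the runner-up's ranking score is at least $(\alpha_{o(j)}+c)\,v_{o(j),j} \ge (1+c)\,v_{o(j),j}$ by (ii), and since in a second-price auction with boosts the winner pays the runner-up's ranking score minus her own boost (floored at $0$),
\[
p_{w(j),j} \;\ge\; \bigl((1+c)\,v_{o(j),j} - c\,v_{w(j),j}\bigr)^+ \;\ge\; (1+c)\,v_{o(j),j} - c\,v_{w(j),j}.
\]

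Finally I would combine these bounds globally over the set $S := \{\, j : w(j)\ne o(j)\,\}$. On $j \notin S$ we have $v_{w(j),j} = v_{o(j),j}$, hence $\sum_{j\in S} v_{o(j),j} - \sum_{j\in S} v_{w(j),j} = O - W$. Chaining $\rev \le W$, nonnegativity of the payments outside $S$, and the per-auction bound above,
\[
W \;\ge\; \rev \;\ge\; \sum_{j\in S} p_{w(j),j} \;\ge\; (1+c)\sum_{j\in S} v_{o(j),j} \;-\; c\sum_{j\in S} v_{w(j),j}.
\]
Substituting $\sum_{j\in S} v_{o(j),j} = \sum_{j\in S} v_{w(j),j} + O - W$ and rearranging gives $(c+2)\,W \ge (c+1)\,O + \sum_{j\in S} v_{w(j),j} \ge (c+1)\,O$, i.e.\ $W \ge \tfrac{c+1}{c+2}\,\welopt$. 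As $I$ and $\alpha \in \Theta$ were arbitrary, this lower-bounds the competitive ratio.

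I do not expect a deep obstacle here; the care is all in the bookkeeping. The two places to stay vigilant are: using only what $\Theta$ guarantees — ROAS feasibility and $\alpha_i \ge 1$ — rather than any equilibrium hypothesis (this is exactly what makes the statement stronger than a price-of-anarchy bound); and getting the payment inequality right, in particular observing that $o(j)$ is among the losing bidders of auction $j$ precisely when $w(j)\ne o(j)$, and that dropping the $(\cdot)^+$ is legitimate because $(x)^+ \ge x$. Tie-breaking is harmless: whenever $w(j)$ itself attains $\max_i v_{i,j}$ we may simply set $o(j) = w(j)$, so that each auction enters at most one of the two sums and no value is double-counted.
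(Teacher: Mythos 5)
Your proof is correct and follows essentially the same route as the paper's: both partition the auctions into those where the realized winner agrees with the welfare-optimal winner and those where it does not, lower-bound the winner's payment on each disagreeing auction by $(1+c)\,v_{o(j),j} - c\,v_{w(j),j}$ using $\alpha_i \ge 1$ and the boost structure, and close via $\rev \le \wel$ from the ROAS constraint. The only difference is cosmetic bookkeeping — you substitute and rearrange where the paper adds $\rev/(c+1)$ to $\wel$ — so there is nothing substantive to distinguish the two arguments.
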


\begin{proof}
Observe that the optimal allocation $x^{\opt}$ when auto-bidders have return on ad spend constraints only is simply $x^{\opt}_{i,j,1} = 1$ if and only if $v_{i,j} = \max_{i' \in [n]} v_{i', j}$, where ties are broken arbitrarily. Therefore, the optimal welfare is $\welopt = \sum_{j=1}^m \max_{i \in [n]} v_{i,j}$. 

Consider any multiplier vector $\alpha = (\alpha_1, \cdots ,\alpha_n) \in \Theta$ with $\alpha_i \geq 1$ for all $i$, and let $x$ be the resulting allocation of the second-price auctions with uniform boosts with boost weight $c$. Let $A$ to be the set of auctions $x$ agree with the optimal allocation $x^{\opt}$. In other words, $j \in A$ if and only if $x_{i,j,1} = x^{\opt}_{i,j,1}$ for all bidder $i$. 
By definition, we have
\begin{align*}
\wel(x) &= \sum_{j=1}^m \sum_{i \in [n]} v_{i,j} \cdot x_{i,j,1} \\
&=\sum_{j \in A} \max_{i \in [n]} v_{i,j} + \sum_{j \in [m] \setminus A} \sum_{i \in [n]} v_{i,j} \cdot x_{i,j,1}.
\end{align*}
For an auction $j \in [m] \setminus A$, let the winner under $x$ be bidder $i^*$ and let the winner under $x^\opt$ be bidder $i^\opt$. Then, bidder $i^*$'s payment in auction $j$ is
\begin{align*}
    (\hat b_{2, j} - z_{i^*, j})^+
    &{}\geq \alpha_{i^\opt} \cdot v_{i^\opt, j} + z_{i^\opt, j} - z_{i^*, j} \\
    &{}\geq v_{i^\opt, j} + c \cdot (v_{i^\opt, j} - v_{i^*, j}) \\
    &{}= (c + 1) \cdot \max_{i \in [n]} v_{i, j} - c \cdot \sum_{i \in [n]} v_{i, j} \cdot x_{i,j,1},
\end{align*}
where the first inequality follows that the second highest ranking score is not less than the ranking score of bidder $i^\opt$, and the second inequality is due to $\alpha_{i^\opt} \geq 1$ and the definition of the uniform boosts. 
Summing over all auctions in $B$, we have that the total revenue is at least
\[
\rev(x) \geq \sum_{j \in [m] \setminus A} \left( (c+1) \cdot \max_{i\in[n]} v_{i,j} - c \cdot \sum_{i\in[n]} v_{i,j} \cdot x_{i,j,1}\right).
\]
Observe that that $\wel(x) \geq \rev$ since no bidders pay more than their values, and therefore, we can conclude the proof:
\begin{align*}
\frac{c+2}{c+1} \cdot \wel(x) &{}\geq \wel(x) + \frac{\rev(x)}{c+1} \\
&{}\geq \sum_{j=1}^m \max_{i \in [n]} v_{i,j} 
   + \sum_{j \in [m] \setminus A}\sum_{i\in[n]} \left(1 - \frac{c}{c+1}\right) \cdot v_{i, j} \cdot x_{i,j,1} \\
&{}\geq \sum_{j=1}^m \max_{i \in [n]} v_{i,j} = \welopt.
\end{align*}
\end{proof}

\section{Value Competitive Boosts}
\label{sec:uboosts}
In this section, we show how to extend uniform boosts to multi-slot auctions and more flexible boost requirements, i.e., Theorem~\ref{thm:no_budget_intro}. 

As stated in the introduction, $c$-value-competitive boosts are boosts $z$ satisfying that $z_{i,j} - z_{i',j} \geq c \cdot (v_{i,j} - v_{i',j})$ for any pair of bidders $i$ and $i'$, and auction $j$ such that $v_{i,j} > v_{i',j}$.

\begin{theorem}[Restatement of Theorem \ref{thm:no_budget_intro}]\label{thm:no_budget_restate}
When auto-bidders have return on ad spend constraints but no budget constraints, Vickrey–Clarke–Groves (VCG) auctions with $c$-value-competitive boosts guarantees a $(c+1)/(c+2)$-approximation to the optimal welfare.
\end{theorem}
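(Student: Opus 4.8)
The plan is to carry the argument behind Theorem~\ref{thm:uboost} over to the multi-slot, $c$-value-competitive setting. The skeleton is unchanged: lower bound the revenue by $(c+1)$ times the welfare lost relative to the optimum, and observe that no bidder ever pays more than the value she receives, so $\rev(x) \le \wel(x)$ here (with no budgets, the liquid welfare of the realized allocation is exactly the total realized value). Then $\tfrac{c+2}{c+1}\wel(x) = \wel(x) + \tfrac{1}{c+1}\wel(x) \ge \wel(x) + \tfrac{1}{c+1}\rev(x) \ge \welopt$. Throughout we use that every undominated multiplier satisfies $\alpha_i \ge 1$, since without a budget a multiplier below $1$ is dominated.

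First I would separate across auctions: with no budgets the optimal allocation just assigns, in each auction $j$, the $s_j$ highest-value bidders to the slots in decreasing position order, so $\welopt = \sum_{j} \welopt_j$, and likewise $\wel(x) = \sum_j \wel_j(x)$ and $\rev(x) = \sum_j \rev_j(x)$. It then suffices to prove the per-auction bound $\rev_j(x) \ge (c+1)\,(\welopt_j - \wel_j(x))$ for every $j$; summing over $j$ and dividing gives $\wel(x) + \tfrac{1}{c+1}\rev(x) \ge \welopt$, and combining with $\rev(x)\le\wel(x)$ finishes.

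The per-auction bound is the core. Fix $j$, drop it from the notation, and set $\pos_{s+1}:=0$. Expanding any assignment's total value along the increments $\pos_m - \pos_{m+1}$ gives $\text{value} = \sum_{m=1}^{s}(\pos_m-\pos_{m+1})\cdot(\text{total value of the bidders in the top } m \text{ slots})$, so $\welopt_j - \wel_j(x) = \sum_{m=1}^{s}(\pos_m-\pos_{m+1})\,\Delta_m$, where $\Delta_m$ is the sum of the $m$ largest values minus the sum of the values of the bidders actually in the top $m$ slots. Letting $R_m$ be the set of bidders in the top $m$ realized slots and $S_m$ the set of the $m$ highest-value bidders, $\Delta_m = \sum_{i\in S_m\setminus R_m}v_i - \sum_{i'\in R_m\setminus S_m}v_{i'} \ge 0$; these two sets are equinumerous, and every $i \in S_m\setminus R_m$ has $v_i$ at least the $m$-th largest value, hence at least $v_{i'}$ for every $i'\in R_m\setminus S_m$, so any bijection $i' \mapsto i$ between them is value-dominating. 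Because $i'$ occupies one of the top $m$ slots, its VCG payment includes the externality $(\hat b_{m+1} - z_{i'})^+(\pos_m - \pos_{m+1})$ that it imposes on the participant ranked $(m+1)$-st — for $m=s$ this is the highest excluded bidder, consistent with $\pos_{s+1}=0$. Since $i\notin R_m$, its ranking score $\alpha_i v_i + z_i$ is at most $\hat b_{m+1}$, and $\alpha_i\ge 1$ gives $\hat b_{m+1} \ge v_i + z_i$; together with $z_i - z_{i'} \ge c(v_i - v_{i'})$ (applicable when $v_i > v_{i'}$; when $v_i = v_{i'}$ this pair contributes $0$ to $\Delta_m$ and the claimed bound is vacuous) we get $(\hat b_{m+1} - z_{i'})^+ \ge (c+1)v_i - c v_{i'} \ge (c+1)(v_i - v_{i'})$. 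Summing these externalities over all pairs, and noting the remaining top-$m$ winners' payments are nonnegative, bounds the portion of $\rev_j(x)$ with coefficient $(\pos_m - \pos_{m+1})$ below by $(c+1)(\pos_m-\pos_{m+1})\Delta_m$; summing over $m$ yields $\rev_j(x) \ge (c+1)(\welopt_j - \wel_j(x))$.

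The point requiring the most care is verifying that these externality terms, collected over all levels $m$ and all displaced winners, are genuinely distinct summands of the VCG payments rather than being double-counted: for a fixed winner the terms of its payment are indexed by the slot it would be pushed down to, and the level-$m$ charge uses exactly the term indexed by $m+1$, so each term is used at most once; and one must remember to include the externality on the highest excluded bidder so that the level $m=s$ term of $\welopt_j - \wel_j(x)$ is covered. Once this accounting is set up, the inner inequality is precisely the single-slot computation of Theorem~\ref{thm:uboost} applied level by level.
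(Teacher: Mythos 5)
Your proposal is correct and follows essentially the same route as the paper's proof: decompose welfare and optimal welfare over position increments into prefix sets of slots, lower-bound the revenue contributed by the displaced winners $R_m\setminus S_m$ via the losing optimal bidders' ranking scores ($\hat b_{m+1}\ge v_i+z_i$ using $\alpha_i\ge 1$) together with the $c$-value-competitive condition, and combine with $\rev(x)\le\wel(x)$ to get $\tfrac{c+2}{c+1}\wel(x)\ge\wel(x)+\tfrac{1}{c+1}\rev(x)\ge\welopt$. Your bijection between $S_m\setminus R_m$ and $R_m\setminus S_m$ is the paper's use of $|S_{j,k}\setminus S^{\opt}_{j,k}|=|S^{\opt}_{j,k}\setminus S_{j,k}|$ in pairwise form, and your explicit handling of the tie case $v_i=v_{i'}$ and of the externality term for the lowest slot is, if anything, slightly more careful than the paper's bookkeeping.
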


\begin{proof}
Consider any multiplier vector $\alpha = (\alpha_1, \cdots ,\alpha_n) \in \Theta$ with $\alpha_i \geq 1$ for all $i$, and let $x$ be the resulting allocation of the VCG auctions with $c$-value-competitive boosts $z$. Let $x^\opt$ be the allocation that gets the optimal welfare. For notation convenience, define $S_{j,k}$ and $S^\opt_{j,k}$ to be the set of bidders who get allocated to any of the top-$k$ slots in auction $j$ in allocation $x$ and $x^\opt$.

By definition, we have
\begin{align*}
\wel(x) &= \sum_{j=1}^m \sum_{k =1}^{s_j}\sum_{i \in [n]} v_{i,j}  \cdot \pos_{j,k} \cdot x_{i,j,k} \\
&= \sum_{j=1}^m \sum_{k =1}^{s_j}\sum_{i \in S_{j,k}} v_{i,j} \cdot (\pos_{j,k} - \pos_{j,k+1}).
\end{align*}
As for the revenue performance, observe that we have
\begin{align*}
\rev(x) &=  \sum_{j=1}^m \sum_{i \in [n]} p_{i,j} \\
&= \sum_{j=1}^m \sum_{k =1}^{s_j} \sum_{i \in S_{j,k}} (\hat{b}_{k,j} - z_{i,j})^+ \cdot (\pos_{j,k} - \pos_{j,k+1}) \\
&\geq \sum_{j=1}^m \sum_{k =1}^{s_j} (\pos_{j,k} - \pos_{j,k+1}) \cdot \sum_{i \in S_{j,k} \setminus S^\opt_{j,k}} (\hat{b}_{k,j} - z_{i,j})^+.
\end{align*}
For auction $j$, the $k$-th highest ranking score $\hat{b}_{k,j}$ should be higher than the ranking scores from bidders who do not win any of the top-$k$ slots. Formally, 
\[
    \forall i \not\in S_{j,k}, \hat{b}_{k,j} \geq \alpha_i \cdot v_{i,j} + z_{i,j} \geq v_{i,j} + z_{i,j}.
\]
As a result, for any auction $j$ and slot $k \in [s_j]$, we have
\begin{align*}
   & \sum_{i \in S_{j,k} \setminus S^\opt_{j,k}} (\hat{b}_{k,j} - z_{i,j})^+\\
   \geq{}& | S_{j,k} \setminus S^\opt_{j,k}| \cdot \hat{b}_{k,j} - \sum_{i \in S_{j,k} \setminus S^\opt_{j,k}} z_{i,j}\\
   ={}& | S^\opt_{j,k} \setminus S_{j,k}| \cdot \hat{b}_{k,j} - \sum_{i \in S_{j,k} \setminus S^\opt_{j,k}} z_{i,j}\\
   \geq{}& \sum_{i \in S^\opt_{j,k} \setminus S_{j,k}} (v_{i,j}+z_{i,j})- \sum_{i \in S_{j,k} \setminus S^\opt_{j,k}} z_{i,j}\\
   \geq{}& \sum_{i \in S^\opt_{j,k} \setminus S_{j,k}} (1+c)\cdot v_{i,j}- \sum_{i \in S_{j,k} \setminus S^\opt_{j,k}} c\cdot v_{i,j}.
\end{align*}
The second equation above comes from the fact that $|S_{j,k}| = |S^\opt_{j,k}| = k$ implies $| S_{j,k} \setminus S^\opt_{j,k}|=| S^\opt_{j,k} \setminus S_{j,k}|$. The second inequality follows $\hat b_{k,j} \geq v_{i,j} + z_{i,j}$ for $i \not \in S_{j, k}$. The last inequality follows the definition of $c$-value-competitive boosts: Note that for any $i\in S^\opt_{j,k} \setminus S_{j,k}$ and any $i' \in S_{j,k} \setminus S^\opt_{j,k}$, $v_{i, j} \geq v_{i',j}$, so $z_{i,j} - z_{i',j} \geq c \cdot (v_{i, j} - v_{i',j})$.

Combining everything,
\begin{align*}
    &{}\frac{c+2}{c+1} \cdot \wel(x) \geq \wel(x) + \frac{\rev(x)}{c+1} \\
    \geq{}& \sum_{j=1}^m \sum_{k =1}^{s_j} (\pos_{j,k} - \pos_{j,k+1}) \cdot\\ &{}\left(\sum_{i \in S^\opt_{j,k} \cap S_{j,k}} v_{i,j} + \sum_{i \in S^\opt_{j,k} \setminus S_{j,k}} v_{i,j} + \left(1-\frac{c}{c+1}\right) \sum_{i \in S_{j,k} \setminus S^\opt_{j,k}} v_{i,j} \right)\\
    \geq{}& \sum_{j=1}^m \sum_{k =1}^{s_j} (\pos_{j,k} - \pos_{j,k+1}) \cdot \sum_{i \in S^\opt_{j,k}} v_{i,j} = \welopt,
\end{align*}
which concludes the proof.
\end{proof}

\section{Benchmark Boosts}
\label{sec:benchmark_boost}
In this section, we switch to focus on general cases by considering auto-bidders with both return on ad spend constraints and budget constraints. The $c$-benchmark-competitive boosts are computed with respect to a benchmark $o$. A benchmark $o$ is a collection of rankings for auctions, such that $o_j: [n] \to [n]$ is a permutation of bidders in auction $j$. The benchmark allocation induced from a benchmark $o$ allocates the $k$-th slot in auction $j$ to bidder $o_j(k)$. With a slight abuse of notation, the welfare performance of a benchmark $o$ is defined as
\[
    \wel(o) = \sum_{i \in [n]} \min\left(B_i, \sum_{j=1}^m \sum_{k=1}^{s_j} \indic\{i = o_j(k)\} \cdot v_{i,j} \cdot \pos_{j,k}\right),
\]
where $\indic \{\cdot\}$ is the indicator function. 
Given a benchmark $o$, $c$-benchmark-competitive boosts are boosts $z$ satisfying
\[
    z_{o_j(k),j} - z_{o_j(k'),j} \geq c \cdot v_{o_j(k), j},
\]
for all $k \in [s_j]$, $k' > k$, and auction $j$.

\begin{theorem}[Restatement of Theorem \ref{thm:budget_intro}]

\label{thm:benchmark}
When auto-bidders have both return on ad spend constraints and budget constraints, Vickrey–Clarke–Groves (VCG) auctions with $c$-benchmark-competitive boosts guarantees a $(c+1)/(c+2)$-approximation to $\wel(o)$.
\end{theorem}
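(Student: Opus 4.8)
The plan is to mimic the proof of Theorem~\ref{thm:no_budget_restate}, with the benchmark allocation induced by $o$ playing the role of the optimal allocation, and to absorb the one new ingredient --- the budget caps inside $\wel(\cdot)$ --- by splitting the bidders according to whether their realized value exceeds their budget. Fix any $\alpha \in \Theta$, let $x$ be the resulting allocation of the VCG auction with $c$-benchmark-competitive boosts $z$, and write $v_i(x) = \sum_{j,k} x_{i,j,k} v_{i,j} \pos_{j,k}$ and $v_i(o) = \sum_{j,k} \indic\{i = o_j(k)\} v_{i,j} \pos_{j,k}$ for bidder $i$'s realized and benchmark values, with $W_i = \min(B_i, v_i(x))$ and $W^o_i = \min(B_i, v_i(o))$ the corresponding liquid-welfare terms. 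Partition the bidders into $N_{\mathrm{big}} = \{ i : v_i(x) \geq B_i \}$ and $N_{\mathrm{small}} = [n] \setminus N_{\mathrm{big}}$. For $i \in N_{\mathrm{big}}$ we have $W_i = B_i \geq W^o_i$, so these bidders already contribute at least their benchmark share and require no further analysis. For $i \in N_{\mathrm{small}}$ the budget constraint is slack (the payment is at most $v_i(x) < B_i$), so a multiplier below $1$ would be dominated (see Section~\ref{sec:prelim}); hence $\alpha_i \geq 1$, and $W_i = v_i(x)$.

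The second step reproves the revenue lower bound against $o$. As in Theorem~\ref{thm:no_budget_restate}, the telescoped form of the VCG payments gives $\rev(x) \geq \sum_j \sum_{k=1}^{s_j} (\pos_{j,k} - \pos_{j,k+1}) \sum_{i \in S_{j,k} \setminus S^o_{j,k}} (\hat b_{k,j} - z_{i,j})^+$, writing $S^o_{j,k} = \{o_j(1), \dots, o_j(k)\}$ for the benchmark's top-$k$ set. Fix $j$ and $k$. Since $|S_{j,k} \setminus S^o_{j,k}| = |S^o_{j,k} \setminus S_{j,k}|$, since $\hat b_{k,j} \geq \alpha_i v_{i,j} + z_{i,j}$ for every $i \notin S_{j,k}$, and since every pair in a bijection between $S^o_{j,k} \setminus S_{j,k}$ and $S_{j,k} \setminus S^o_{j,k}$ consists of a benchmark-higher bidder $i'$ and a benchmark-lower bidder $i$ --- so that $c$-benchmark-competitiveness yields $z_{i',j} - z_{i,j} \geq c \cdot v_{i',j}$ --- the inner sum is at least $\sum_{i \in S^o_{j,k} \setminus S_{j,k}} (\alpha_i + c) v_{i,j}$. (There is no $-c\sum v$ correction term here, unlike the value-competitive case, precisely because benchmark-competitiveness charges $c$ times the value of the higher-ranked bidder only.) Discarding the nonnegative summands with $i \in N_{\mathrm{big}}$ and using $\alpha_i \geq 1$ on $N_{\mathrm{small}}$, then telescoping the position differences back up per bidder and auction, yields $\rev(x) \geq (1+c) \sum_{i \in N_{\mathrm{small}}} \sum_j (v^o_{i,j} - v^x_{i,j})^+$, where $v^o_{i,j}$ and $v^x_{i,j}$ are $i$'s value in auction $j$ under $o$ and $x$ respectively.

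To conclude, I would use the elementary inequality $( W^o_i - v_i(x) )^+ \leq \sum_j (v^o_{i,j} - v^x_{i,j})^+$, valid for every bidder since $W^o_i \leq v_i(o) = \sum_j v^o_{i,j}$ and $v_i(x) = \sum_j v^x_{i,j}$; hence $\rev(x) \geq (1+c) \sum_{i \in N_{\mathrm{small}}} ( W^o_i - v_i(x) )^+$. Because no bidder pays more than $W_i$, we also have $\rev(x) \leq \wel(x)$, so $\frac{c+2}{c+1}\wel(x) = \wel(x) + \frac{1}{c+1}\wel(x) \geq \wel(x) + \frac{1}{c+1}\rev(x) \geq \wel(x) + \sum_{i \in N_{\mathrm{small}}} ( W^o_i - v_i(x) )^+$. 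Writing $\wel(x) = \sum_{i \in N_{\mathrm{big}}} B_i + \sum_{i \in N_{\mathrm{small}}} v_i(x)$, applying $B_i \geq W^o_i$ on $N_{\mathrm{big}}$ and $v_i(x) + ( W^o_i - v_i(x) )^+ = \max(v_i(x), W^o_i) \geq W^o_i$ on $N_{\mathrm{small}}$, the right-hand side is at least $\sum_i W^o_i = \wel(o)$, which gives $\wel(x) \geq \frac{c+1}{c+2}\wel(o)$.

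The main obstacle is that, with caps present, $\wel(x)$ no longer decomposes slot-by-slot, so the clean telescoping $\wel(x) + \rev(x)/(c+1) \geq \wel(o)$ of Theorem~\ref{thm:no_budget_restate} cannot be run verbatim. The $N_{\mathrm{big}}/N_{\mathrm{small}}$ split is what resolves this: it lets over-budget bidders self-certify, since their capped welfare $B_i$ already dominates their benchmark contribution, while it restores $\alpha_i \geq 1$ --- and thus the unchanged revenue argument --- on the remaining bidders, whose per-auction revenue surplus is converted into a per-bidder liquid-welfare deficit by $( W^o_i - v_i(x) )^+ \leq \sum_j (v^o_{i,j} - v^x_{i,j})^+$. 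One should also verify the boundary conventions ($\pos_{j,s_j+1} = 0$, and bidder $i$ absent from auction $j$ when $o_j^{-1}(i) > s_j$) so that the per-auction telescoping is exact, but this is routine.
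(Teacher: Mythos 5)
Your proof is correct and follows essentially the same route as the paper's: the same VCG revenue lower bound obtained from $c$-benchmark-competitiveness via the bijection between $S^o_{j,k}\setminus S_{j,k}$ and $S_{j,k}\setminus S^o_{j,k}$, the same $\wel(x)+\rev(x)/(c+1)$ combination, and a bidder partition to absorb the budget caps. The only substantive difference is bookkeeping: you partition by value-versus-budget ($v_i(x)\geq B_i$) rather than the paper's payment-meets-budget set $M$, and you convert the per-slot revenue surplus into per-bidder deficits via $(W^o_i - v_i(x))^+ \leq \sum_j (v^o_{i,j}-v^x_{i,j})^+$ instead of the paper's slot-level comparison --- which is, if anything, slightly more careful, since it makes your decomposition of $\wel(x)$ an exact identity.
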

\begin{proof}
Consider any multiplier vector $\alpha = (\alpha_1, \cdots ,\alpha_n) \in \Theta$, and let $x$ be the resulting allocation of the VCG auctions with $c$-value-competitive boosts. Let $M$ be the set of auto-bidders whose payments meet their budgets. For auto-bidder $i \in [n] \setminus M$ not meeting the budget, we have $\alpha_i \geq 1$. Moreover, let $x^o$ be the allocation induced from the benchmark $o$. For notation convenience, define $S_{j,k}$ and $S^o_{j,k}$ to be the set of bidders who get allocated to any of the top-$k$ slots in auction $j$ in allocation $x$ and $x^o$, respectively. By definition, we have
\begin{align*}
\wel(x) &=\sum_{i \in [n]} \min\left(B_i, \sum_{j=1}^m \sum_{k =1}^{s_j}v_{i,j}  \cdot \pos_{j,k} \cdot x_{i,j,k}\right) \\
&= \sum_{i\in M} B_i + \sum_{j=1}^m \sum_{k =1}^{s_j}\sum_{i \in S_{j,k}\setminus M} v_{i,j} \cdot (\pos_{j,k} - \pos_{j,k+1}).
\end{align*}
As for the revenue performance, observe that we have
\begin{align*}
\rev(x) &=  \sum_{j=1}^m \sum_{i \in [n]} p_{i,j} \\
&= \sum_{j=1}^m \sum_{k =1}^{s_j} \sum_{i \in S_{j,k}} (\hat{b}_{k,j} - z_{i,j})^+ \cdot (\pos_{j,k} - \pos_{j,k+1}) \\
&\geq \sum_{j=1}^m \sum_{k =1}^{s_j} (\pos_{j,k} - \pos_{j,k+1}) \cdot \sum_{i \in S_{j,k} \setminus S^o_{j,k}} (\hat{b}_{k,j} - z_{i,j}).
\end{align*}

For auction $j$, the $k$-th highest ranking score $\hat{b}_{k,j}$ should be higher than the ranking scores from bidders who do not win any of the top-$k$ slots. In other words, $\forall i \not\in S_{j,k}$, $\hat{b}_{k,j} \geq \alpha_i \cdot v_{i,j} + z_{i,j}$. Similar to the proof for Theorem~\ref{thm:no_budget_restate}, we have that $|S^o_{j,k}|=| S_{j,k}| = k$, which implies that $|S^o_{j,k} \setminus S_{j,k}| = |S_{j,k} \setminus S^o_{j,k}|$. Therefore, for any auction $j$ and slot $k \in [s_j]$,
\begin{align*}
   \sum_{i \in S_{j,k} \setminus S^o_{j,k}} (\hat{b}_{k,j} - z_{i,j}) \geq   \sum_{i' \in S^o_{j,k} \setminus S_{j,k}} (\alpha_{i'} v_{i',j} + z_{i',j})-\sum_{i \in S_{j,k} \setminus S^o_{j,k}} z_{i,j}.
\end{align*}
For each $i' \in S^o_{j,k}$ and $i \not \in S^o_{j,k}$, $i'$ ranks ahead of $i$ in the benchmark $o$. By the definition of $c$-benchmark-competitive boosts, we have
$z_{i',j} - z_{i,j} \geq c\cdot v_{i',j}$,
and therefore,
\[
   \sum_{i \in S_{j,k} \setminus S^o_{j,k}} (\hat{b}_{k,j} - z_{i,j}) \geq  \sum_{i \in S^o_{j,k} \setminus S_{j,k}} (\alpha_i + c) \cdot v_{i,j}.
\]
Plugging this back into $\rev(x)$, we have
\begin{align*}
    \rev(x)&\geq\sum_{j=1}^m \sum_{k =1}^{s_j} (\pos_{j,k} - \pos_{j,k+1}) \cdot \sum_{i \in S^o_{j,k} \setminus S_{j,k}} (\alpha_i + c) \cdot v_{i,j} \\&\geq\sum_{j=1}^m \sum_{k =1}^{s_j} (\pos_{j,k} - \pos_{j,k+1}) \cdot \sum_{i \in (S^o_{j,k} \setminus S_{j,k}) \setminus  M} (1 + c) \cdot v_{i,j},
\end{align*}
where the last inequality follows that $\alpha_i \geq 1$ for $i \in [n] \setminus M$.
Combining everything together, we finally have
\begin{align*}
    &~\frac{c+2}{c+1} \cdot \wel(x) \geq \wel(x) + \frac{\rev(x)}{c+1} \\
    \geq&~ \sum_{i\in M} B_i+ \sum_{j=1}^m \sum_{k =1}^{s_j} (\pos_{j,k} - \pos_{j,k+1}) \cdot\\ &\left[\left(\sum_{i \in S_{j,k}\setminus M } v_{i,j}\right) + \left(\sum_{ i \in (S^o_{j,k} \setminus S_{j,k}) \setminus  M} v_{i,j}\right)  \right]\\
    \geq&~  \sum_{i \in M} B_i + \sum_{j=1}^m \sum_{k =1}^{s_j} (\pos_{j,k} - \pos_{j,k+1}) \cdot \sum_{i \in S^o_{j,k} \setminus M} v_{i,j} \\
    \geq& \sum_{i=1}^n \min\left(B_i, \sum_{j=1}^m \sum_{k =1}^{s_j} (\pos_{j,k} - \pos_{j,k+1}) \sum_{i \in S^o_{j,k}} v_{i,j}\right) = \wel(o),
\end{align*}
which concludes the proof.
\end{proof}

The minimal boosts that are $c$-benchmark-competitive with respect to a benchmark $o$ are 
\[
    z_{o_j(k),j} = c \cdot \sum_{\kappa = k}^{s_j} v_{i,o_j(\kappa)},
\]
for all $k \in [s_j]$ and auction $j$. For $k > s_j$, $z_{o_j(k) ,j}$ is simply $0$. We denote such boosts by {\em benchmark boosts}. Intuitively, for a bidder $i$ who ranks $k \leq s_j$ in auction $j$ in benchmark $o$, her boost should be $c$ times the sum of the values of the bidders ranked below her in auction $j$ in benchmark $o$ (including herself).


\section{Non-truthful Auctions}\label{sec:non-truthful}

As we mentioned in Section~\ref{sec:prelim} that in truthful auctions, any non-uniform bidding strategy is dominated by uniform bidding strategies. As a result, it is reasonable to focus on uniform bidding strategies only, which are easy to implement and guarantee optimality for auto-bidders. However, in non-truthful auctions, the optimal non-uniform bidding strategy may dominate uniform strategies. 

Therefore, in non-truthful auctions such as GSP or first-price auctions, it could be beneficial for auto-bidders to adopt non-uniform bidding strategies despite of the additional cost from implementation complexity. In this section, we discuss under what condition and to what extend, our theorems for VCG auctions could generalize to non-truthful auctions.

\subsection{Generalized Second-Price (GSP) auction}

GSP auctions are commonly used in online advertising when there are multiple slots, yet it is widely known that it is not incentive compatible \cite{edelman2007internet}. The following example shows that a non-uniform bidding strategy outperforms any uniform bidding strategies for auto-bidders with return on ad spend constraints.

\begin{example}
\label{ex_gsp}
Consider a setting with three auto-bidders and two auctions (i.e. $n = 3$, $m= 2$). The first auction has $2$ slots with position normalizer $1$ and $0.9$. The second auction has $1$ slot with position normalizer $1$. The bidders' valuations are as follows:

\begin{center}
  \begin{tabular}{ccccc}
    \toprule
                & & Auction $1$     & & Auction $2$ \\
    \midrule
     Bidder $1$ & & $v_{1,1} = 1$ & &  $v_{1,2} = 0.8$  \\
     Bidder $2$ & & $v_{2,1} = 0.9$   & & $v_{2,2} = 0$  \\
     Bidder $3$ & & $v_{2,1} = 0$   & & $v_{2,2} = 1$  \\
    \bottomrule
  \end{tabular}
\end{center}
\end{example}
In this example, bidder $2$ and $3$ are only interested in one auction and assume that they use bid multiplier $1$. First we consider the cases when bidder $1$ uses a uniform bid multiplier $\alpha_1$.
\begin{itemize}
    \item Bidder $1$ does not win auction $2$: in this case, bidder $1$ gets at most value $1$ from auction $1$.
    \item Bidder $1$ wins auction $2$: in this case, $\alpha_1$ needs to be at least $1.25$. Bidder $1$ will win the first slot of auction $1$ with payment $0.9$, and win the slot of auction $2$ with payment $1$. In this case, bidder $1$ pays $1.9$ to get total value $1.8$ and violates the return on ad spend constraint.
\end{itemize}
To sum up, bidder $1$ with a uniform bid multiplier can only win auction $1$ to get value at most $1$. On the other hand, consider the case when bidder $1$ uses an arbitrarily small bid multiplier $\varepsilon > 0$ in auction $1$ and a bid multiplier at least $1.25$ in auction $2$. In this case, bidder $1$ wins the second slot of auction $1$ and slot of auction $2$, and pays in total $1$. In this situation, bidder $1$ gets total value $1.7~(=1 \cdot 0.9 + 0.8 \cdot 1)$. Therefore, bidder $1$ can get more value by non-uniform bidding.


Although uniform bidding is no longer the optimal strategy for auto-bidders in GSP auctions, generalizations of our results are still possible if a minimum bidding level $\gamma > 0$ is guaranteed for all bidders across all auctions, i.e., 
\[\forall i \in [n], j \in [m], ~ b_{i,j} \geq \gamma \cdot v_{i, j}.\]

\begin{lemma}\label{lem:general}
  Theorem~\ref{thm:uboost}, Theorem~\ref{thm:no_budget_restate}, and Theorem~\ref{thm:benchmark} can be generalized to auction format $\mathcal{A}$ with boosts of boost weight $c$ accordingly with approximation ratio $(c + \gamma) / (c + \gamma + 1)$, if the following properties are met:
  \begin{itemize}
    \item $\gamma > 0$ is the minimum bidding level;
    \item The allocation of $\mathcal{A}$ is the same as VCG auctions with boosts;
    \item The payment of $\mathcal{A}$ is no less than the payment of VCG auctions with boosts on all auction inputs, and no more than the bid, i.e., for a bidder $i$ winning the $k$-th slot in auction $j$, her payment $p_{i,j}$ satisfies the criteria that
    \[
        \sum_{\kappa = k+1}^{s_j} (\hat{b}_{\kappa,j} - z_{i,j})^+ \cdot (\pos_{\kappa-1,j} - \pos_{\kappa,j}) \leq p_{i,j} \leq b_{i,j}.
    \]
  \end{itemize}
\end{lemma}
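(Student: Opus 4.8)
The plan is to re-run the proofs of Theorems~\ref{thm:uboost}, \ref{thm:no_budget_restate}, and~\ref{thm:benchmark} essentially verbatim, having first isolated the only three places where those proofs use properties of the auction: (i) the allocation rule, (ii) a lower bound on the auction's revenue, and (iii) the inequality $\rev(x) \le \wel(x)$. Each of these transfers to $\mathcal{A}$ under the stated hypotheses, and the sole change is that the constant $1$ --- which in those proofs enters through $\alpha_i \ge 1$, i.e., through undominatedness --- is replaced by the minimum bidding level $\gamma$.

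Concretely, fix any admissible bid profile $b$ with $b_{i,j} \ge \gamma\, v_{i,j}$ for all $i,j$ and let $x$ be the resulting allocation. Because the allocation of $\mathcal{A}$ agrees with that of VCG-with-boosts, bidders are ranked by $b_{i,j} + z_{i,j}$, so for every auction $j$, slot $k$, and non-winner $i \notin S_{j,k}$ we still have $\hat b_{k,j} \ge b_{i,j} + z_{i,j} \ge \gamma\, v_{i,j} + z_{i,j}$, which is exactly the inequality the original proofs used in the form $\hat b_{k,j} \ge v_{i,j} + z_{i,j}$. Because $\mathcal{A}$'s payment is at least the VCG-with-boosts payment on every input, in particular on the ranking-score vector realized by $b$, $\rev_{\mathcal{A}}(x)$ dominates the VCG revenue expression $\sum_j \sum_k \sum_{i \in S_{j,k}} (\hat b_{k,j} - z_{i,j})^+ (\pos_{j,k} - \pos_{j,k+1})$ from which each revenue bound begins. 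And $\rev_{\mathcal{A}}(x) \le \wel(x)$ because an admissible profile satisfies each bidder's return-on-ad-spend and budget constraints, so her total payment is at most $\min\{B_i,\ (\text{uncapped value received})\}$; summing over $i$ gives the claim. (The remaining hypothesis $p_{i,j} \le b_{i,j}$ is the standard no-overcharging condition, satisfied by VCG, GSP, and first-price, that keeps $\mathcal{A}$ consistent with the auto-bidder model.)

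With these three observations in hand, the arithmetic propagates without surprises: the revenue lower bounds of shape $(c+1)\cdot(\text{benchmark value}) - c\cdot(\text{allocation value})$ in Theorems~\ref{thm:uboost} and~\ref{thm:no_budget_restate} become $(c+\gamma)\cdot(\text{benchmark value}) - c\cdot(\text{allocation value})$, the $(1+c)\,v_{i,j}$ term in Theorem~\ref{thm:benchmark} becomes $(\gamma+c)\,v_{i,j}$, and in the closing telescoping step the factor $\tfrac{c+2}{c+1} = 1 + \tfrac{1}{c+1}$ multiplying $\wel(x)$ is replaced by $1 + \tfrac{1}{c+\gamma} = \tfrac{c+\gamma+1}{c+\gamma}$; the term the original proof discards then carries coefficient $1 - \tfrac{c}{c+\gamma} = \tfrac{\gamma}{c+\gamma} \ge 0$, precisely the nonnegativity needed to discard it. This yields $\welopt \le \tfrac{c+\gamma+1}{c+\gamma}\,\wel(x)$ for Theorems~\ref{thm:uboost} and~\ref{thm:no_budget_restate}, and $\wel(o) \le \tfrac{c+\gamma+1}{c+\gamma}\,\wel(x)$ for Theorem~\ref{thm:benchmark}, i.e., the claimed $(c+\gamma)/(c+\gamma+1)$ approximation ratio.

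This is a routine ``the same proof still works'' argument, so I do not expect a genuine obstacle; the point needing the most care is the budget bookkeeping in the generalization of Theorem~\ref{thm:benchmark}. There the original proof could invoke $\alpha_i \ge 1$ only for bidders outside the budget-saturated set $M$, whereas now $b_{i,j} \ge \gamma\, v_{i,j}$ holds for all bidders simultaneously, so the revenue bound $\rev_{\mathcal{A}}(x) \ge \sum_j\sum_k (\pos_{j,k} - \pos_{j,k+1}) \sum_{i \in S^o_{j,k}\setminus S_{j,k}} (\gamma+c)\,v_{i,j}$ now ranges over all of $S^o_{j,k}\setminus S_{j,k}$; one then re-checks slot by slot that adding the $\sum_{i\in S_{j,k}\setminus M} v_{i,j}$ contribution of $\wel(x)$ to the $\sum_{i\in S^o_{j,k}\setminus S_{j,k}} v_{i,j}$ contribution of the revenue still dominates $\sum_{i\in S^o_{j,k}\setminus M} v_{i,j}$, which holds because $S^o_{j,k}\cap S_{j,k}\subseteq S_{j,k}$ and deleting the budget-saturated indices only shrinks the target. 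A minor secondary point is that for non-truthful $\mathcal{A}$ (e.g., GSP) bids need not be uniform, so every inequality must be written through $b_{i,j}$ directly rather than through $\alpha_i v_{i,j}$; since none of the steps above use $b_{i,j} = \alpha_i v_{i,j}$ and the only property of $b_{i,j}$ invoked anywhere is $b_{i,j} \ge \gamma\, v_{i,j}$, this causes no difficulty.
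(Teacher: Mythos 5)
Your proposal is correct and follows exactly the route the paper indicates: it keeps the welfare expression unchanged (same allocation), lower-bounds $\mathcal{A}$'s revenue by the VCG-with-boosts revenue, and replaces every use of $b_{i,j}\ge v_{i,j}$ with $b_{i,j}\ge\gamma\, v_{i,j}$, which turns each $(c+1)$ into $(c+\gamma)$ and yields the $(c+\gamma)/(c+\gamma+1)$ ratio. The paper omits the detailed proof but states precisely these three observations as its justification, so your write-up is simply a fleshed-out version of the intended argument, including the correct handling of the budget-saturated set $M$ in the benchmark case.
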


We omit the proof of Lemma~\ref{lem:general} but highlight that comparing with all previous theorems: (i) the welfare on the same instance remains the same for $\mathcal{A}$, because the allocation does not change; (ii) the revenue on the same instance is weakly higher for $\mathcal{A}$, because the payment weakly increases; (iii) in previous proofs, we apply $b_{i,j} = \alpha_i \cdot v_{i,j} \geq v_{i,j}$ for bidder $i$ with bid multiplier $\alpha_i \geq 1$, which can be replaced by $b_{i,j} \geq \gamma \cdot v_{i,j}$ in the proof of Lemma~\ref{lem:general}.

\xhdr{Enforcing uniform bidding.}
Observe that $\gamma \geq 1$ in GSP auctions when uniform bidding is enforced, so we have the following corollary that generalizes all our previous theorems for GSP auctions.

\begin{corollary}
  When uniform bidding strategies are enforced, Theorem~\ref{thm:uboost}, Theorem~\ref{thm:no_budget_restate}, and  Theorem~\ref{thm:benchmark} can be generalized to GSP auctions with boosts.
\end{corollary}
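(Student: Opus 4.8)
The plan is to obtain this corollary as an immediate instantiation of Lemma~\ref{lem:general}. I would take $\mathcal{A}$ to be the GSP auction with boosts of weight $c$ and set the minimum bidding level to $\gamma = 1$; then the guaranteed ratio $(c+\gamma)/(c+\gamma+1)$ becomes $(c+1)/(c+2)$, which is exactly the bound stated in Theorems~\ref{thm:uboost}, \ref{thm:no_budget_restate}, and \ref{thm:benchmark}. So the entire task reduces to checking that, under enforced uniform bidding, GSP-with-boosts satisfies the three hypotheses of Lemma~\ref{lem:general} with $\gamma = 1$.

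First I would pin down the minimum bidding level. Under enforced uniform bidding each bidder $i$ bids $b_{i,j} = \alpha_i v_{i,j}$, and the argument in Section~\ref{sec:prelim} shows that a multiplier strictly below $1$ is dominated for any bidder who has not yet saturated her budget; since $\Theta$ excludes dominated (and invalid) multipliers, $\alpha_i \ge 1$, hence $b_{i,j} \ge v_{i,j} = \gamma v_{i,j}$, for every non-saturated bidder. The bidders left out are precisely the budget-saturated ones, which in the proofs of the three theorems are exactly the bidders collected in the set $M$ and for which the inequality $b_{i,j} \ge \gamma v_{i,j}$ is never invoked; so $\gamma = 1$ is an admissible minimum bidding level for that argument. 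The allocation hypothesis is then immediate from the definitions in Section~\ref{sec:prelim}: GSP with boosts and VCG with boosts share the same allocation rule, assigning the $k$-th largest ranking score $b_{i,j}+z_{i,j}$ to the $k$-th slot.

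The remaining step, which I expect to be the only one requiring a short computation, is the payment hypothesis: for a bidder $i$ winning slot $k$ in auction $j$, the GSP-with-boosts payment must lie between the VCG-with-boosts payment and $b_{i,j}$. Since $i$'s ranking score is the $k$-th largest, $\hat b_{k,j} = b_{i,j} + z_{i,j}$, and $\hat b_{\kappa,j} \le \hat b_{k+1,j} \le \hat b_{k,j}$ for all $\kappa \ge k+1$. For the upper bound, $(\hat b_{k+1,j} - z_{i,j})^+ \le \hat b_{k,j} - z_{i,j} = b_{i,j}$, and multiplying by $\pos_{j,k}$ keeps the product at most $b_{i,j}$ (the same observation already recorded in Section~\ref{sec:prelim} that payments never exceed the bid). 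For the lower bound, bounding each term $(\hat b_{\kappa,j} - z_{i,j})^+ \le (\hat b_{k+1,j} - z_{i,j})^+$ in the VCG-with-boosts payment and telescoping $\sum_{\kappa=k+1}^{s_j}(\pos_{j,\kappa-1} - \pos_{j,\kappa}) = \pos_{j,k} - \pos_{j,s_j} \le \pos_{j,k}$ shows the VCG-with-boosts payment is at most $(\hat b_{k+1,j} - z_{i,j})^+\cdot \pos_{j,k}$, i.e.\ at most the GSP-with-boosts payment; a bidder winning no slot pays $0$ in both, so the sandwich is trivial there. With all three hypotheses verified, Lemma~\ref{lem:general} yields the claimed $(c+1)/(c+2)$ guarantees for GSP with boosts.

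I would flag that the genuinely delicate point is the handling of budget-saturated bidders in the first step: the clean phrasing "$\gamma \ge 1$ in GSP when uniform bidding is enforced" is slightly stronger than what holds for literally every bidder, and one must appeal to the role of the set $M$ in the earlier proofs to see that $\gamma = 1$ still suffices. By contrast, the payment comparison between GSP-with-boosts and VCG-with-boosts is routine — just monotonicity of the order statistics $\hat b_{\cdot,j}$ together with a telescoping sum of position differences.
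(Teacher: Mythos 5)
Your proposal is correct and follows exactly the paper's route: the corollary is obtained by instantiating Lemma~\ref{lem:general} with $\gamma \ge 1$, which the paper justifies in one sentence ("$\gamma \geq 1$ in GSP auctions when uniform bidding is enforced"). Your explicit verification of the allocation and payment hypotheses, and your caveat that $\alpha_i \ge 1$ only holds for non-budget-saturated bidders (which suffices because the theorems' proofs invoke that bound only outside the set $M$), are details the paper leaves implicit but are consistent with its argument.
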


In fact, uniform bidding is a commonly used strategy with good approximation guarantees \cite{feldman2007budget, feldman2008algorithmic,bateni2014multiplicative,deng2020data}. Non-uniform bidding strategies, in this case, could be less beneficial for auto-bidders because the potential improvement is limited or dominated by the additional cost from the implementation complexity.

\subsection{First-price auctions}

First-price auctions, which has been widely adopted by the ad exchange recently, is another important non-truthful auction in the industry~\citep{paes2020competitive}. The next example demonstrates a case in which a non-uniform bidding strategy outperforms uniform bidding strategies.

\begin{example}
  Consider a setting with two auto-bidders and two auctions (i.e., $n = 2$, $m = 2$). Both auctions have only one slot, $s_1 = s_2 = 1$. The bidders' valuations are as follows:
  \begin{center}
  \begin{tabular}{ccccc}
    \toprule
                & & Auction $1$    & & Auction $2$ \\
    \midrule
     Bidder $1$ & & $v_{1,1} = 4$ & & $v_{1,2} = 1$  \\
     Bidder $2$ & & $v_{2,1} = 1$ & & $v_{2,2} = 2$  \\
    \bottomrule
  \end{tabular}
\end{center}
\end{example}
In this example, when both bidders adopt uniform bidding, they must have bid multipliers being $1$ to cope with the return on ad spend constraint. In this case, they receive values $4$ and $2$, respectively. However, if bidder $1$ bids $2$ and $3$ in these two auctions, respectively, she can receive a total value $5$. Note that this is an equilibrium because if bidder $2$ raises her bids to win either or both auction, her return on ad spend constraint will be violated.

Interestingly, if uniform bidding strategies are enforced under first-price auction, then having $\alpha_i = 1$ is the optimal strategy for auto-bidding environments with return on ad spend constraints but without budget constraint, which then simply leads to an equilibrium achieving the optimal welfare and revenue. 

\begin{theorem}\label{thm:fpa}
  When uniform bidding strategies are enforced, the optimal welfare and revenue are achieved under any Nash equilibrium of auto-bidders with return on ad spend constraints but without budget constraint in first-price auctions.
\end{theorem}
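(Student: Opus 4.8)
The plan is to show that at any Nash equilibrium (indeed, at any profile in $\Theta$), the realized first-price allocation coincides with the welfare-maximizing allocation and every winner pays exactly her realized value; this immediately gives $\wel = \rev = \welopt$. Recall that in a first-price position auction the winner of slot $k$ in auction $j$ pays $b_{i,j}\cdot\pos_{j,k}$, and that, with no budgets, $\welopt$ is attained by the assignment that gives the top-$s_j$ highest-value bidders the slots of auction $j$ in decreasing order of value (since $\pos_{j,\cdot}$ is non-increasing). I will also use the two elementary upper bounds that hold for every $\alpha\in\Theta$: $\wel(x(\alpha,\mathcal A))\le\welopt$ by definition of $\welopt$, and $\rev\le\wel(x(\alpha,\mathcal A))\le\welopt$ because the ROAS constraint forces each bidder's total payment to be at most her received value (with no budget, liquid welfare is just total received value). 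So it suffices to establish the matching lower bounds.

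First I would pin down the winners' multipliers. With no budgets, a multiplier strictly below $1$ is dominated, so at equilibrium $\alpha_i\ge 1$ for every $i$. Now take any bidder $w$ who wins a slot with $v_{w,j}>0$: her total payment equals $\alpha_w$ times her total received value, since in a first-price auction she pays her bid $\alpha_w v_{w,j}$ times $\pos_{j,k}$ on each slot $(j,k)$ she wins. As this total value is strictly positive, the ROAS constraint $\alpha_w\cdot(\text{value})\le(\text{value})$ forces $\alpha_w = 1$. Hence every winner with positive value bids truthfully and pays exactly her value there; a winner of a zero-value slot pays $\alpha_w\cdot 0\cdot \pos_{j,k}=0$, which also equals her value there. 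Consequently $\rev = \sum_{j}\sum_{k} v_{w_{j,k},j}\cdot\pos_{j,k} = \wel(x(\alpha,\mathcal A))$, so it only remains to show $\wel(x(\alpha,\mathcal A))\ge\welopt$, i.e. that the allocation is efficient.

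For efficiency, I would rule out any inversion. Suppose the realized allocation is not welfare-optimal; then in some auction $j$ there are bidders $p,q$ with $v_{p,j}>v_{q,j}$ such that $q$ wins a slot strictly better than $p$'s (with $p$ possibly unassigned in $j$). Bidder $q$'s ranking score is $\alpha_q v_{q,j}\le v_{q,j}$ (it is $0$ if $v_{q,j}=0$, and equals $v_{q,j}$ by the previous paragraph if $v_{q,j}>0$), whereas bidder $p$'s score is $\alpha_p v_{p,j}\ge v_{p,j}>v_{q,j}\ge \alpha_q v_{q,j}$. Thus $p$'s score strictly exceeds $q$'s, so $p$ is ranked strictly above $q$ and cannot occupy a worse slot than $q$ nor be shut out while $q$ wins --- a contradiction. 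The same comparison shows a zero-value bidder is always ranked below every positive-value bidder, so the top slots of each auction go to its positive-value bidders in decreasing order of value, which is precisely the efficient assignment; hence $\wel(x(\alpha,\mathcal A))=\welopt$, and combined with $\rev=\wel(x(\alpha,\mathcal A))$ we also get $\rev=\welopt$.

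The part that needs the most care --- and the only real friction --- is the bookkeeping for degenerate cases: zero values (a bidder who "wins" a worthless slot has an unconstrained multiplier, so the ROAS argument must be applied only to the positive part of her received value) and ties in ranking scores (handled here for free, since the inversion argument yields the strict inequality $v_{p,j}>v_{q,j}$ and hence strictly ordered scores). It is worth noting that the Nash-equilibrium hypothesis enters only through $\alpha_i\ge 1$ (undominatedness); the statement in fact holds for every $\alpha\in\Theta$.
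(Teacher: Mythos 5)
Your proof is correct, but it takes a genuinely different route from the paper's. The paper argues by a deviation at equilibrium: since overbidding in a first-price auction violates the ROAS constraint, every multiplier is at most $1$; hence if the allocation were inefficient, the bidder who is shortchanged relative to the optimal assignment could raise her multiplier to $1$, win her optimal slot (everyone else's score is at most their value), and strictly improve her total value while keeping payment equal to value --- contradicting the Nash hypothesis. The revenue claim is left largely implicit there. You instead squeeze the winners' multipliers from both sides --- $\alpha_i\ge 1$ from undominatedness (the definition of $\Theta$) and $\alpha_i\le 1$ from ROAS feasibility for anyone receiving positive value --- concluding $\alpha_i=1$, which simultaneously forces truthful scores (hence an efficient, assortative allocation by the rearrangement observation), and payments equal to values (hence $\rev=\wel=\welopt$). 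Your approach buys two things: it handles welfare and revenue in one stroke, and it shows the conclusion holds for every profile in $\Theta$, not only at Nash equilibrium, which is strictly stronger than the stated theorem and consistent with the paper's broader "valid and undominated, not necessarily equilibrium" philosophy. The only point worth flagging is that your $\alpha_i\ge 1$ step relies on multipliers below $1$ being dominated \emph{in first-price auctions}, whereas the paper's footnote only asserts this for GSP and VCG; the extension is a one-line check (raising the multiplier to $1$ remains ROAS-feasible because payment rises to exactly the received value, and the won set only grows), but you should state it rather than import it silently. Your handling of the degenerate cases (zero values, zero position normalizers) is adequate since any welfare-relevant inversion necessarily involves a slot with positive position normalizer, at which point your case analysis applies.
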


\begin{proof}
  If there is any auction $j$ such that the allocation of the auction is not optimal under the equilibrium, then there must be an auto-bidder $i$ who wins slot $k \in [s_j]$ in the optimal allocation, but does not win any slot higher than $k$ in the equilibrium. Note that overbidding in first-price auction leads to violation of the return on ad spend constraint, so the bid multiplier for any auto-bidder is no more than $1$ in any equilibrium. Therefore, auto-bidder $i$ can raise her bid multiplier to $1$ and win slot $k$ in auction $j$. Moreover, observe that raising her bid multiplier will not violate her return on ad spend constraint but strictly improves her total value, contradicting with the assumption of Nash equilibrium. 
\end{proof}

Despite of the desirable guarantee from Theorem~\ref{thm:fpa}, enforcing uniform bidding in the first-price auction could be a rather strong assumption comparing with the same assumption for GSP auctions~\citep{deng2020data}. One reason is that non-uniform bidding in this case may bring up significant improvements, dominating the additional cost from its implementation complexity. In fact, more and more such researches have emerged following the ad exchange market shifting to first-price auctions \cite{morishita2020online,paes2020competitive}.

\section{Experiments}

In this section, we validate our theoretical findings with semi-synthetic data derived from real auction data of a major search engine. The main import of using real ad auction data is that the data captures variation across bidders. Note that when the bidders are symmetric (i.e., their value distributions are i.i.d. across bidders), optimal efficiency is achieved in any symmetric equilibrium, and therefore, no efficiency improvement can be observed by applying our mechanisms. We simulate both VCG and GSP auctions with bids from auto-bidders only. Instead of using the real constraints, we generate artificial budgets and return on ad spend targets, which excludes any practical noises from the real ad system. 
We emphasize that our main objective for conducting experiments is
to validate our theoretical findings rather than investigating the
efficiency potentials on an auction system actually implemented in
practice. 
In practice, ad auction systems often need to take care of many practical aspects that would never be considered in theory.

\subsection{Experiment Setup}

\xhdr{Optimal Benchmark.}
To avoid getting distracted from the computation of the optimal benchmark, we first define the benchmark allocation and then derive the corresponding budgets of the bidders from the benchmark allocation so that the benchmark allocation would be optimal. In particular, we select a random subset of bidders and mark them as budget-constrained. For each budget-constrained bidder, we assign a random factor $\mu_i \in (0, 1)$, and we assign $\mu_i = 1$ for bidders without budget constraints. Given a vector of $\mu$, the optimal benchmark allocation in auction $j$ is obtained by ranking bidders according to $\mu_i \cdot v_{i,j}$. Finally, for the selected bidders that are budget-constrained, we set their budgets equal to their total received values under the benchmark allocation. Intuitively, $\mu_i$ for bidder $i$ can be seen as one minus the Lagrange multiplier for her budget constraint in the liquid welfare optimization program. In this way, we augment the auction dataset with artificial budgets in which the optimal benchmark allocation is known.

\xhdr{Simulation Procedures.} To properly evaluate the efficiency of a new mechanism, after the generation of the dataset, we first pre-train the bid multipliers $\alpha$ using the corresponding auction format (VCG or GSP) with no boosts in $25$ iterations to obtain an equilibrium as a starting point.

We simulate the response of auto-bidders by gradient descent on their bid multipliers in log space until convergence~\citep{nesterov2013introductory,aggarwal2019autobidding}. Formally, let $\alpha_{i, t}$ be the bid multiplier for bidder $i$ in iteration $t$. Moreover, let $\mathsf{target~spend}_{i, t}$ be the minimum between bidder $i$'s budget and her total received value in iteration $t$, and let $\mathsf{spend}_{i, t}$ be bidder $i$'s total payment in iteration $t$. Then, the bidder $i$'s bid multiplier in iteration $t+1$ is updated by
\begin{align*}
  \log \mathsf{\alpha}_{i, t+1} = (1-\eta_t) \cdot \log \mathsf{\alpha}_{i, t} + \eta_t \cdot \log \frac{\mathsf{target~spend}_{i, t}}{\mathsf{spend}_{i, t}},
\end{align*}
where $\eta_t \in (0, 1)$ is a properly chosen learning rate for iteration $t$. Intuitively, bidder $i$'s bid multiplier increases for the next iteration if $\mathsf{spend}_{i, t} < \mathsf{target~spend}_{i, t}$; otherwise if $\mathsf{spend}_{i, t} > \mathsf{target~spend}_{i, t}$, her bid multiplier decreases for the next iteration.

After obtaining a starting point, we simulate another $25$ iterations for auctions with boosts. In this way, we can observe both the initial impact of adding the boosts and how the impact changes overtime after auto-bidders' response. 

\xhdr{Boosts.} In addition to the baseline in which we continue to use auctions without boosts, we experiment with uniform boosts (Section~\ref{sec:warmup}) and benchmark boosts (Section~\ref{sec:benchmark_boost}) with different boost weights $c$, denoted by $\text{uboost-}c$ and $\text{benchmark-}c$, respectively. Recall that uniform boosts are designed for environments without budget constraints while benchmark boosts can accommodate budget constraints, and therefore, we expect benchmark boosts outperform uniform boosts in our experiments. 

All metrics we report are relative to the baseline (i.e., no boosts); therefore, the metrics of the baseline are normalized to 1.

\subsection{Experimental Results}

\begin{figure}[h]
    \centering
    \begin{subfigure}[b]{0.4\textwidth}
        \centering
        \includegraphics[width=\textwidth]{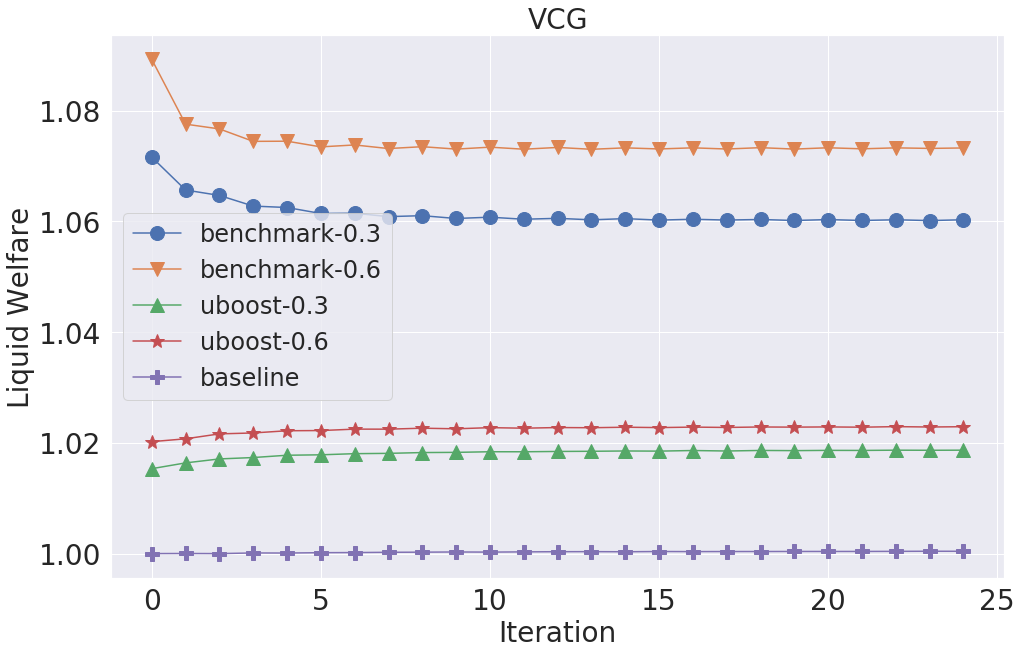}
        \caption{Liquid Welfare}
        \label{fig:VCG_liquid}
    \end{subfigure}
    \qquad
    \begin{subfigure}[b]{0.4\textwidth}
        \centering
        \includegraphics[width=\textwidth]{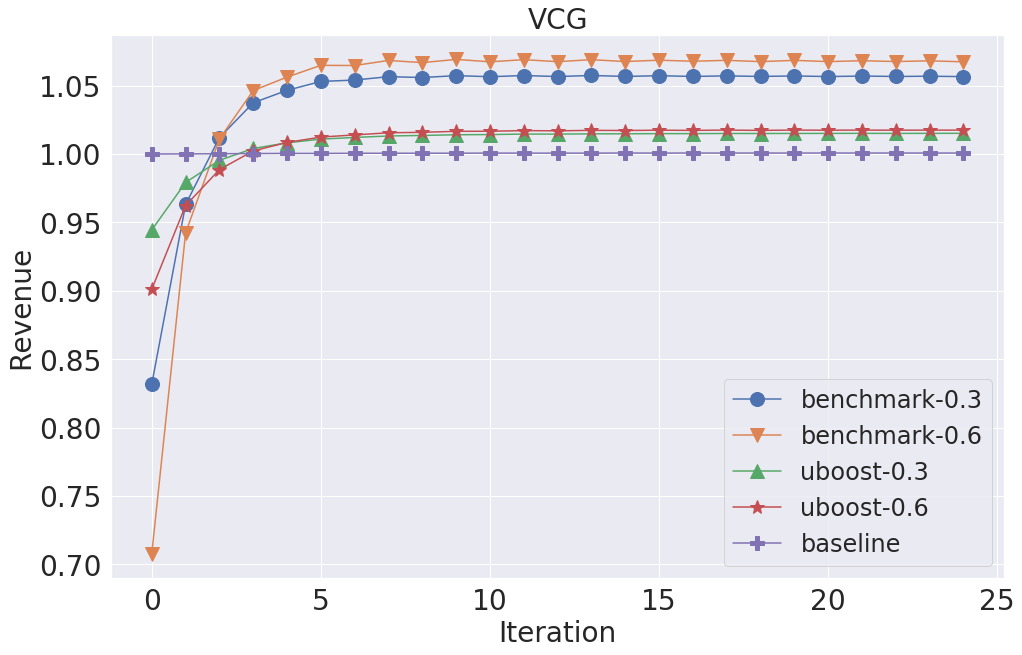}
        \caption{Revenue}
        \label{fig:VCG_rev}
    \end{subfigure}
    \caption{Welfare and revenue performance of uniform boosts and benchmark boosts for each iteration in VCG.}\label{fig:VCG_trend}
\end{figure}

Figure~\ref{fig:VCG_liquid} reports the trend of welfare performance under uboost-$0.3$, uboost-$0.6$, benchmark-$0.3$, and benchmark-$0.6$ in VCG auctions. Note that all of these boosts have positive initial impact on welfare, which verifies that our welfare guarantees do not rely on auto-bidders bidding in equilibrium. Moreover, as predicted, benchmark boosts have better welfare improvement than the uniform boosts, and as the boost weight $c$ increases, the welfare improvement also increases. After the bidders' response, we observe that the welfare improvements of benchmark boosts slightly decrease while the welfare improvements of uniform boosts slightly increase, but their relative ordering remains the same. We only report the trends for VCG auctions and the trends for GSP auctions are similar.

As for revenue performance, Figure~\ref{fig:VCG_rev} shows a negative initial impact on revenue for all boosts. This is because the boosts are deducted from the bidders' payments, and therefore, we observe that the larger the boost weight is, the stronger the negative impact of revenue is in the first iteration. After the auto-bidders' response, the revenue starts to recover and ultimately, the revenue impacts of all boosts become positive after convergence. Again, benchmark boosts outperform uniform boosts in terms of revenue.

\begin{table}[h]
    \centering
    \begin{tabular}{@{}ccccccc@{}}
        \toprule
        \multirow{2}{*}{Boosts} & & \multicolumn{2}{c}{VCG} & & \multicolumn{2}{c}{GSP} \\
        \cline{3-4} \cline{6-7}
         & & Welfare & Revenue & & Welfare & Revenue \\
        \midrule
        uboost-$0.3$ & & $+1.83\%$ & $+1.44\%$ & & $+2.04\%$ & $+1.66\%$ \\
        uboost-$0.6$ & & $+2.25\%$ & $+1.67\%$ & & $+2.38\%$ & $+1.83\%$ \\
        uboost-$0.9$ & & $+2.22\%$ & $+1.50\%$ & & $+2.23\%$ & $+1.58\%$ \\
        uboost-$1.2$ & & $+2.03\%$ & $+1.23\%$ & & $+1.95\%$ & $+1.19\%$ \\
        uboost-$1.5$ & & $+1.78\%$ & $+0.91\%$ & & $+1.65\%$ & $+0.79\%$ \\
        \midrule
        benchmark-$0.3$ & & $+5.99\%$ & $+5.58\%$ & & $+6.24\%$ & $+5.83\%$ \\
        benchmark-$0.6$ & & $+7.28\%$ & $+6.67\%$ & & $+7.42\%$ & $+6.79\%$ \\
        benchmark-$0.9$ & & $+7.81\%$ & $+7.00\%$ & & $+7.86\%$ & $+7.00\%$ \\
        benchmark-$1.2$ & & $+8.06\%$ & $+7.09\%$ & & $+8.06\%$ & $+7.03\%$ \\
        benchmark-$1.5$ & & $+8.20\%$ & $+7.08\%$ & & $+8.16\%$ & $+6.99\%$ \\
        \bottomrule
    \end{tabular}
    \caption{Welfare and revenue lifts of uniform boosts and benchmark boosts after convergence.}
    \label{tab:final}
\end{table}

Table~\ref{tab:final} shows the welfare and revenue impact of uniform boosts and benchmark boosts (with more boost weights) after auto-bidders converge to an equilibrium, and as expected, benchmark boosts outperform uniform boosts significantly. Notice that the welfare performance of uniform boosts starts to drop when the boost weight is large. The reason is that uniform boosts generally do not align with the the optimal allocation maximizing the liquid welfare so that it may have negative impact on welfare. In contrast, the welfare performance of benchmark boosts continues to increase even when the boost weight is large. 

However, the revenue performance does not always increase: it drops when the boost weights increase from $1.2$ to $1.5$. Moreover, observe that there is a gap between the welfare improvement and revenue improvement as shown in Table~\ref{tab:final}, and as the boost weight $c$ increases, the gap becomes larger. This gap mainly comes from the auto-bidders who cannot hit their target spends. Here, we say an auto-bidder hits her target spend if she exhausts her budget or her total payment is equal to her total received values. Note that if all bidders could hit target spends, the welfare and the revenue would be the same. However, in practice, there are auto-bidders who cannot hit their target spends due to the discontinuity of the bidding landscape. In other words, a bidder may need to increase her bid a lot to win one more slot; however, doing so 
might result in violations of either the budget constraint or the return on ad spend constraint. As the boost weight increases, the bidding landscape becomes more and more discontinuous, making it harder and harder for auto-bidders to hit their target spends. Particularly, as the boost weight $c$ approaches $\infty$, the revenue would approach $0$. Therefore, one must be cautious about choosing the boost weight $c$ to balance its impact on welfare and revenue performance.

\section{Conclusions}

In this paper, we propose using auctions with boosts to improve the welfare and revenue for environments with target CPA and target ROAS auto-bidders in VCG auctions. Under the assumption that the auto-bidders adopt uniform bidding strategies, our results can be further extended to GSP auctions and first-price auctions. Our empirical findings demonstrate the effectiveness of our mechanisms, which support the theoretic results. The findings also emphasize that the practitioner must be cautious on the choice of the boost weight $c$ to balance the impact of welfare and revenue.

\balance

\section{Acknowledgement}

Thanks are due to Roozbeh Ebrahimi, S\'{e}bastien Lahaie, Renato Paes Leme, Shaohua Sun, and Ying Wang for helpful discussions, and for their comments and suggestions.

\bibliographystyle{ACM-Reference-Format}
\bibliography{bib}


\end{document}